\newtheorem{lemma}{Lemma}
\newtheorem{example}{Example}
\begin{document}
\title{Minimum-Latency FEC Design with Delayed Feedback: Mathematical Modeling and Efficient Algorithms }
\author{Xiaoli~Xu, Yong Zeng, \emph{Member, IEEE,}  Yonghui~Li, \emph{Fellow, IEEE}, and Branka~Vucetic, \emph{Fellow, IEEE}
\thanks{X. Xu and Y. Zeng (Corresponding author) are with the National Mobile Communications Research Laboratory, School of Information Science and Engineering, Southeast University, Nanjing 210096, China (email: {xiaolixu, yong\_zeng}@seu.edu.cn). }
\thanks{ Y. Li and B. Vucetic are with the School of Electrical and Information Engineering, The University of Sydney,  Sydney,  Australia (email: \{yonghui.li, branka.vucetic\}@sydney.edu.au).}
}

\maketitle
\vspace{-0.5in}
\begin{abstract}
In this paper, we consider the packet-level forward error correction (FEC) code design, without feedback or with delayed feedback, for achieving the minimum \emph{end-to-end latency}, i.e., the latency between the time that packet is generated at the source and its \emph{in-order delivery} to the application layer of the destination. We first show that the minimum-latency FEC design problem can be modeled as a partially observable Markov decision process (POMDP), and hence the optimal code construction can be obtained by solving the corresponding POMDP. However, solving the POMDP optimally is in general difficult unless the size is very small. To this end, we propose an efficient heuristic algorithm, namely the majority vote policy, for obtaining a high quality approximate solution. We also derive the tight lower and upper bounds of the optimal state values of this POMDP, based on which a more sophisticated $D$-step search algorithm can be implemented for obtaining near-optimal solutions.  The simulation results show that the proposed code designs via solving the POMDP, either with the majority vote policy or the $D$-step search algorithm, strictly outperform the existing schemes, in both cases, without or with only delayed feedback.
\end{abstract}


\section{Introduction}\label{sec:intro}
In many practical applications,  such as real-time video streaming, intelligent transportation and Tactile Internet, the \emph{end-to-end latency} \cite{3GPPLatency}, i.e., the latency between the time that a packet is generated at the source and its \emph{in-order delivery} to the application layer of the destination, is one of the most important performance metrics of a communication system. Establishing reliable connections with low end-to-end latency is one of the key 5G communications services defined by the International Telecommunication Union
(ITU), which is usually referred to as ultra-reliable and low-latency communication (URLLC) \cite{3GPPGeneral}. Most of the existing research on 5G URLLC has focused on physical layer or link layer technologies for improving the latency performance, such as short length channel codes \cite{Mahyar2018} and resource allocation \cite{she2017radio}. However, it was recently revealed in \cite{ETSI} that the transportation layer also needs to be re-designed to meet the critical latency requirement. Many of the existing reliable transport protocols, such as TCP, use the selective repeat automatic-repeat query (ARQ) to enhance the communication reliability. However, the latency performance of ARQ is largely dependent on the feedback delay \cite{Xia2003}. Specifically, when the feedback delay is negligible as compared with the time duration for sending one packet, ARQ can achieve both the maximal throughput and minimum end-to-end latency. Unfortunately, in many applications, such as the satellite communication and underwater acoustic communication, the feedback delay is tremendous due to the long signal propagation delays. Moreover, in some other applications, e.g., the bandwidth or time sensitive applications, we may have no feedback at all due to either unaffordable overhead for establishing the feedback link or unacceptable delay introduced by the feedback link. In the scenarios with delayed feedback or without feedback, the packet-level forward error correction (FEC) code is generally adopted in the transportation layer for enhancing the reliability and latency performance \cite{Sundararajan2009,Cloud2015,LowDelayFEC,Xu2016, Xu2018, Gabriel2018_2,Garrido2018,Sundararajan2017,Malak2019}.

\subsection{End-to-End Latency}
As shown in Fig.~\ref{F:blockDiagram}, in a coded communication system, the \emph{end-to-end latency} of a packet\cite{Cloud2015,LowDelayFEC,Gabriel2018_2,Garrido2018,Sundararajan2017}, denoted as $D_{\mathrm{e2e}}$, consists of three parts:
\begin{itemize}
\item{\emph{Queueing delay $D_q$}: the time since the packet being generated till its first attempt of being transmitted by the transmitter.}
\item{\emph{Decoding delay $D_c$}: the time since the packet being transmitted for the first time till that it is successfully decoded by the receiver.  }
\item{\emph{In-order delivery delay $D_d$}: the time that the packet spends in the buffer at the receiver awaiting in-order delivery to the application layer.}
\end{itemize}
\begin{figure}[htb]
\centering
\includegraphics[scale=0.5]{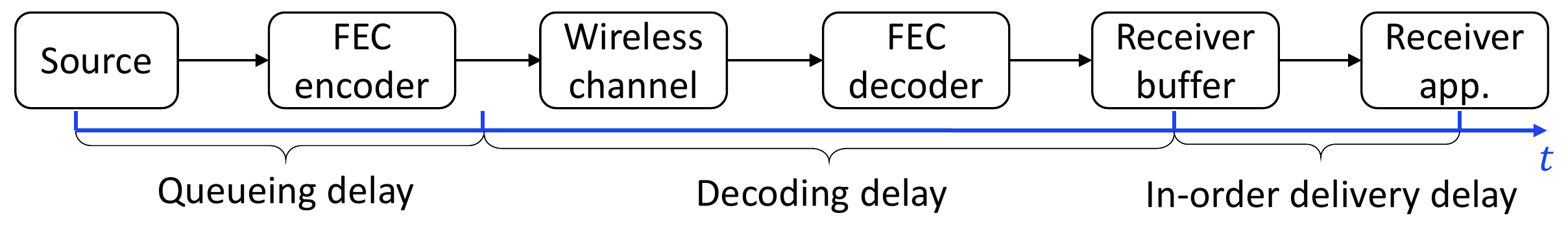}
\caption{A block diagram of the communication systems with coded transportation layer. }
\label{F:blockDiagram}
\end{figure}
A simple example for illustrating the definition of the various delays is presented in Fig.~\ref{F:delay}. Let us assume that there are 9 packets generated at $t=0$ by the source, denoted by $x_1,x_2,...,x_9$. The time for sending one packet is referred to as one time slot and the propagation delay is neglected. We consider a simple scheme to combat the packet erasures, for which a coded packet over all the preceding packets are periodically sent after the transmission of three information packets \cite{LowDelayFEC}. The coded packets are assumed to be generated via random linear network coding (RLNC) from a sufficiently large field, and hence the packets can be successfully decoded  when the number of innovative packets\footnote{A packet is innovative if it is independent of all other packets in the buffer.} collected by the receiver is equal to the number of information packets coded together. As shown in Fig.~\ref{F:delay}(a), the first three packets, which are assumed to be successfully received, are delivered to the application layer immediately after decoding. On the other hand, since packets $x_4$ and $x_5$ are erased, though packet $x_6$ is decoded at the end of time slot 7, it can only be delivered to the application layer after all the preceding packets are successfully decoded. This will happen by the end of slot 12 when $x_4$ and $x_5$ are decoded.  Hence, the in-order delivery delay for $x_6$ is 5 time slots. The delays experienced by each packet is summarized in Fig.~\ref{F:delay}(b).
\begin{figure}[h]
\centering
\subfloat[][]{%
\adjustbox{valign=b}{
\includegraphics[width=0.5\textwidth]{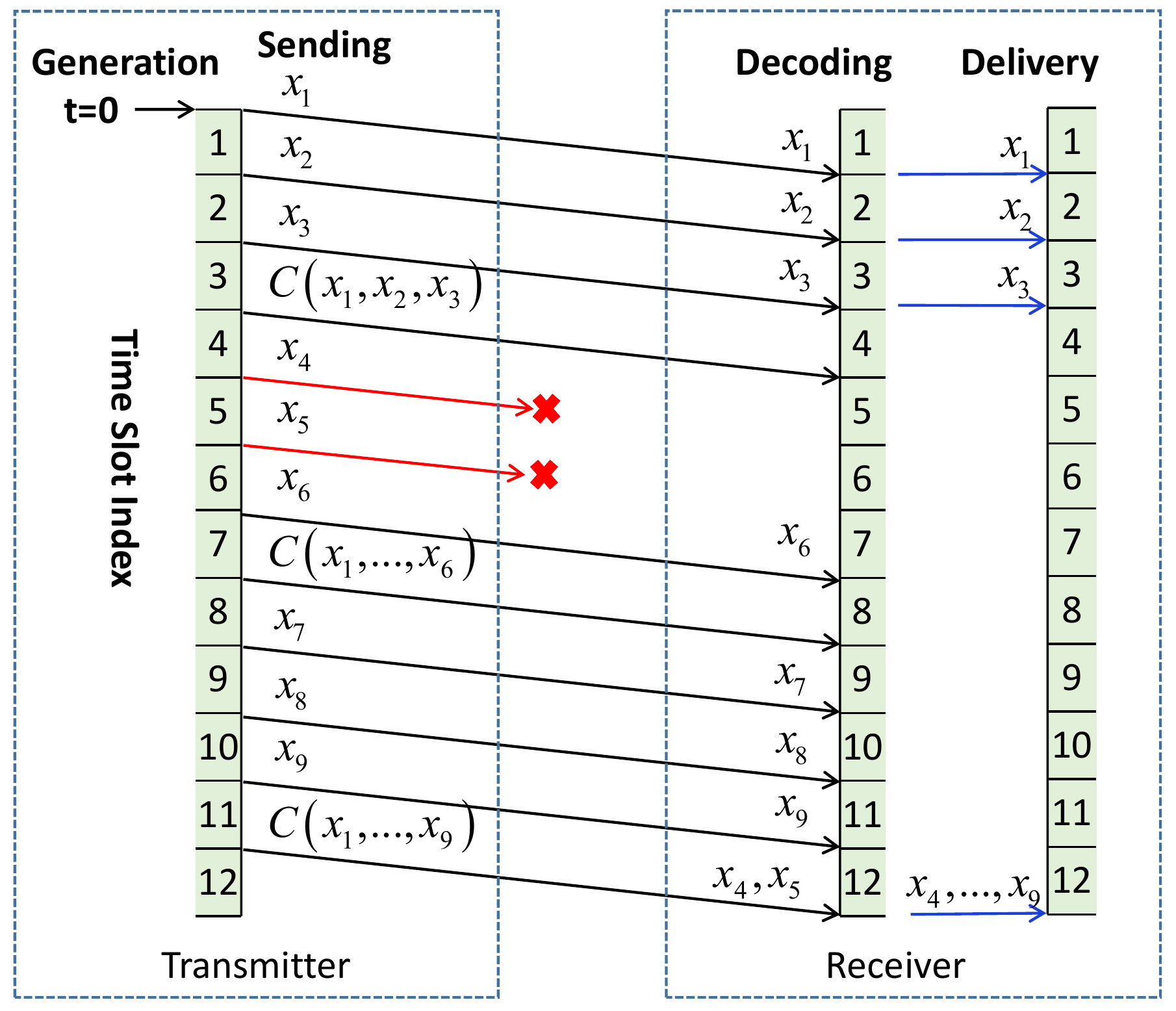}}}
\hspace{0.03cm}
\subfloat[][]{%
\adjustbox{valign=b}{
\begin{tabular}{|c|c|c|c|c|}
\hline
Packet & $D_q$ & $D_c$ & $D_d$ & $D_{\mathrm{e2e}}$ \\
\hline
$x_1$ & 0 & 1  & 0 & 1\\
\hline
$x_2$ & 1 & 1  & 0 & 2\\
\hline
$x_3$ & 2 & 1  & 0 & 3\\
\hline
$x_4$ & 4 & 8  & 0 & 12\\
\hline
$x_5$ & 5 & 7  & 0 & 12\\
\hline
$x_6$ & 6 & 1  & 5 & 12\\
\hline
$x_7$ & 8 & 1  & 3 & 12\\
\hline
$x_8$ & 9 & 1  & 2 & 12\\
\hline
$x_9$ & 10 & 1  & 0 & 12\\
\hline
\end{tabular}}}
\caption{An illustration of the end-to-end latency and its three delay components. (a) A transmission example where black lines denote successful transmission, red lines denote packet lost and blue lines denotes deliver to the application layer; (b) A summary of delays experienced by each packet. }
\label{F:delay}
\end{figure}

Consider a group of $N$ packets, where the end-to-end delay experienced by the $i$th packet is denoted as $D_{\mathrm{e2e}}(i)$, where $i=1,...,N$. In this paper, we focus on the FEC design that minimizes the average end-to-end latency experienced by all the packets, which is defined as
\begin{align}
 \bar{D}_{\mathrm{e2e}}\triangleq\frac{1}{N}\sum_{i=1}^{N}D_{\mathrm{e2e}}(i)=\frac{1}{N}\sum_{i=1}^{N}\left(D_{q}(i)+D_c(i)+D_d(i)\right).\label{eq:expectedDelay}
  \end{align}
 If instantaneous feedback from the receiver to the transmitter is available, achieving the minimum end-to-end delay is straightforward, i.e., by using ARQ \cite{Cloud2015}. However, for those practical  scenarios in the absence of any feedback or with only delayed feedback, achieving the minimum end-to-end delay is much more challenging, which thus motivates the current work. Note that for end-to-end latency minimization where in-order delivery is essential, decoding of a packet when any of its preceding packets remains undecodable will not improve the latency performance. Therefore, without loss of optimality, we may assume that the coded packets are constructed by combining all those unacknowledged information packets sent in previous time slots. As a result, the code design is simplified to deciding when a coded packet should be sent. Intuitively, the more coded packets sent at the early stage, the higher queueing delay it will cause to  the subsequent packets. On the other hand, if insufficient coded packets are sent,  there is a high probability that some packets remain undecodable for a long time, which increases the in-order delivery delay for all subsequent packets. Hence, determining how frequent and when a coded packet should be sent is highly non-trivial, which requires sequential decision making adaptive  to the transmission process. Specifically, in the absence of any feedback, it is desirable to send more information packets at the beginning of the transmission process so as to reduce the queueing delay, whereas send more coded packets at the later stage to reduce the in-order delivery delay. On the other hand, if feedback information is available, we can obtain more accurate estimation about the number of waiting packets at the receiver buffer, which helps to make the coding decision at the transmitter side.

\subsection{Related Work}\label{sec:related}
Adopting FEC in transport layer of communication systems for enhancing the end-to-end latency performance has received significant research attention. Specifically,  in \cite{Sundararajan2009}, the authors proposed a novel architecture by including a new network coding (NC) layer between the TCP and IP layers. For each arrival packet from TCP, a fixed number of network coded packets are generated by using the recent packets stored in the buffer, and the buffer is updated based on the feedback from the receiver. Later in \cite{Cloud2015}, the authors investigated the integration of FEC with ARQ, where the coding rate of the FEC was adaptively chosen according to the delayed feedback. The achievable end-to-end latency was derived as a function of the coding rate and feedback delay. Recently, a low-latency sliding-window network code was proposed in \cite{LowDelayFEC}, where a coded packet is inserted after every $L-1$ information packets. The average end-to-end latency was derived as a function of the parameter $L$ and the packet erasure probability. In \cite{LowDelayFEC}, the feedback information was used to reduce the window size and hence reducing the coding complexity, but not to improve the latency performance. Another sliding-window network coding approach in conjunction with feedback-based selective repeat ARQ was proposed in \cite{Gabriel2018_2} to minimize the decoding delay and complexity. A joint scheduling and coding algorithm was recently proposed in \cite{Garrido2018}, which considered random packet arrivals. In \cite{Garrido2018}, a virtual queue was assumed at the receiver side, whose length is estimated at the transmitter side based on the delayed feedback information. Once the estimated queue length is above a certain threshold, a coded packet will be sent by the transmitter. In \cite{Sundararajan2017}, the authors investigated the design of FEC with instantaneous feedback for reducing the end-to-end latency in multicast networks. In \cite{Malak2019}, the investigated studied the FEC design to minimize the guaranteed delay, which is defined as the average delay plus three standard deviations. To reduce the complexity, the coding operation was limited to two consecutive packets in \cite{Malak2019}.

The aforementioned works \cite{Sundararajan2009,Cloud2015,Sundararajan2017,LowDelayFEC,Garrido2018,Gabriel2018_2,Malak2019} mainly focus on code constructions to reduce the coding complexity or achieve better latency performance than the existing benchmark schemes.  However, to our best knowledge, the \emph{optimal} end-to-end latency performance that can be achieved with delayed feedback\footnote{The scenario without any feedback can be treated as a special case of the delayed feedback by setting the feedback delay to infinity.}, together with the corresponding optimal FEC code construction, remain unknown.

\subsection{Our Contributions}
In this paper, we consider a communication system for sending a block of $N$ packets over a lossy channel with delayed feedback, to achieve the minimum average end-to-end latency $\bar{D}_{\mathrm{e2e}}$. The main contributions of the paper are summarized as follows:
\begin{itemize}
\item{First, we introduce a new rigorous mathematical model for the minimum-delay FEC design problem, which is formulated as a partially observable Markov decision process (POMDP). Specifically, the transmission process corresponds to the state space of the POMDP, which is characterized  by three parameters, i.e., the number of the transmitted information packets at the transmitter, the number of waiting packets to be sent to the application layer of the receiver and the number innovative packets that have been successfully received by the receiver for decoding the waiting packets. The coding decision corresponds to the action space of the POMDP, i.e., either sending a new information packet or sending a coded packet during each time slot. The average end-to-end latency corresponds to the accumulated reward of the POMDP. The delayed feedback information, if available, corresponds to the observation of the POMDP, which can be used to improve the estimation of the current state. The optimal FEC construction that achieves the minimum average end-to-end latency $\bar{D}_{\mathrm{e2e}}$ can be obtained by solving the corresponding POMDP.}
\item{Second, we propose a very efficient heuristic solution to the formulated POMDP, which is referred to as ``majority vote policy", where the action returned by majority vote of the states is selected at each step, with the voting of each state weighted by the probability for the environment being in that state. The FEC obtained by majority vote policy is shown to outperform the existing low-latency FEC designs \cite{LowDelayFEC}\cite{Garrido2018} for both scenarios without or with delayed feedback.  }
\item{Third, we derive tight lower and upper bounds for the optimal state values of the POMDP, based on which the ``$D$-step search algorithm" proposed in \cite{Ross2008} can be implemented for obtaining an improved policy relative to the simple majority vote policy.  With the $D$-step search, the current decision at each time step is made by looking $D$ steps ahead, which leads to asymptotically optimal performance as $D$ increases \cite{Ross2008}. Note that while theoretically the complexity of the $D$-step search algorithm grows exponentially with $D$, simulation results show that the close-to-optimal solution can be attained with $D=2$ for all our considered setups, due to the tight bounds of the optimal state values that we have derived.  }
\end{itemize}

The rest of this paper is organized as follows. Section~\ref{sec:POMDP} gives a brief overview of the general POMDP. In Section~\ref{sec:NoFB} and Section~\ref{sec:DelayedFB}, we respectively discuss the mathematical modeling and efficient algorithms for the minimum-latency FEC design for scenarios without and with delayed feedback. The performance of the proposed code designs is evaluated in Section~\ref{sec:simu} via extensive simulations, and finally we conclude this paper in Section~\ref{sec:con}.

\section{Partially Observable Markov Decision Process}\label{sec:POMDP}
This section gives a brief overview of the POMDP and introduces the key notations. We commence by introducing the fundamentals  of the Markov decision process (MDP) \cite{Smallwood1973}, which serves as a basis for the more complex POMDP. A MDP can be represented as a tuple $<\mathcal{S}, \mathcal{A},{T},{R}>$, where:
\begin{itemize}
\item{$\mathcal{S}$ is the state space of the environment.}
\item{$\mathcal{A}$ is the set of possible actions taken by the agent.}
\item{${T}:\mathcal{S}\times\mathcal{A}\times\mathcal S\rightarrow[0,1]$ is the state transition function, where $T(s,a,s')$ represents the probability of transiting to the next state $s'\in\mathcal{S}$ given that the current state is $s\in\mathcal{S}$ and action $a\in\mathcal{A}$ has been applied. }
\item{${R}:\mathcal{S}\times\mathcal{A}\rightarrow\mathbb{R}$ is the reward function, with $R(s,a)$ denoting the immediate reward for taking action $a$ at state $s$.}
\end{itemize}

The agent's actions are governed by the policy $\pi:\mathcal{S}\times\mathcal{A}\rightarrow[0,1]$, where $\pi(a|s)$ gives the probability of taking action $a$ when in state $s$. By solving a MDP, we aim to find the optimal policy, denoted as $\pi^*$, that maximize some measure of the long term reward. One commonly adopted measure of the long term return at time step $t$ is $G_t\triangleq\mathbb{E}\left[\sum_{k=0}^{\infty}\gamma^kR_{t+k}\right]$, where $R_{t+k}$ is the immediate reward at time step $t+k$ and $0<\gamma\leq 1$ is the discount factor for the future reward. The optimal value of a state $s\in\mathcal{S}$, denoted as $v^*(s)$, is defined as the average return starting from state $s$ and following optimal policy $\pi^*$ thereafter.
The optimal state-value function is the fixed point of the celebrated Bellman's equation \cite{Smallwood1973}, i.e.,
\begin{align}
v^{*}(s)=\max_{a\in\mathcal A}\left[R(s,a)+\gamma\sum_{s'\in \mathcal S}T(s,a,s')v^*(s')\right],\forall s\in\mathcal S.
\end{align}

In MDP, the agent directly observes the current state, based on which an action is taken according to the policy. In contrast, POMDP deals with partially observable environment, where the current state cannot be directly observed. Instead, the agent can only access some observations which give incomplete information for the current state. Specifically, a POMDP can be described by a tuple $<\mathcal{S},\mathcal{A},\mathcal{T},\mathcal{R},\mathcal{Z},{O}>$, where
\begin{itemize}
\item{$<\mathcal{S}, \mathcal{A},\mathcal{T},\mathcal{R}>$ defines a MDP.}
\item{$\mathcal{Z}$ is the set of observations that the agent can access.}
\item{${O}:\mathcal{S}\times\mathcal{A}\times\mathcal{Z}\rightarrow [0,1]$ is the observation function, where $O(s,a,z)$ gives the probability for observing $z\in\mathcal{Z}$ after action $a\in\mathcal{A}$ is taken at the state $s\in\mathcal{S}$. }
\end{itemize}

Since the states are not observable in POMDP, the agent has to choose its actions based on the complete history of past observations and actions, which can be quite memory-expensive. It was shown in \cite{Smallwood1973} that it is sufficient to summarize all the information to a belief of the current state, which is referred to as ``belief state". The belief state at time $t$ is defined as the probability distribution of being in each state $s\in\mathcal{S}$, i.e.,
\begin{align}
b_t(s)=\Pr\left(s_t=s|\{a_i,i=0,...,t-1\},\{z_i,i=1,...,t\},b_0\right),\forall s\in\mathcal{S},
\end{align}
where $a_i$ and $z_i$ are the action and observation at time $i$, respectively, and $b_0$ is the initial belief state. At any time $t$, the belief state $b_t$ can be computed from the previous belief state $b_{t-1}$, the previous action $a_{t-1}$ and the current observation $z_t$. Specifically, the belief update function, denoted as $b_t=\tau(b_{t-1},a_{t-1},z_t)$, can be written as
\begin{align}
b_t(s)=\frac{O(s,a_{t-1},z_t)}{\Pr(z_t|b_{t-1},a_{t-1})}\sum_{s'\in\mathcal S}T(s',a_{t-1},s)b_{t-1}(s'), \forall s\in\mathcal{S},
\end{align}
where $\Pr(z|b,a)\triangleq\sum_{s\in\mathcal{S}}O(s,a,z)\sum_{s'\in\mathcal S}T(s',a,s)b(s')$ is the probability of observing $z$ after taking action $a$ at belief $b$.

The policy of a POMDP specifies the probability for choosing each action under any given belief state. The optimal value of belief state $b$, denoted as $v^*(b)$, is defined as the return received by following the optimal policy $\pi^*$, based on the belief state $b$. It is related with the optimal value of states as
\begin{align}
v^*(b)=\sum_{s\in\mathcal S}b(s)v^*(s), \label{eq:ValueBelief}
\end{align}
where $v^*(s)$ is the optimal state value of $s$, for $s\in\mathcal{S}$. The optimal belief-value function is the fixed point of the following Bellman's equation \cite{Smallwood1973}
\begin{align}
v^*(b)=\max_{a\in\mathcal{A}}\left[\sum_{s\in\mathcal{S}}b(s)R(s,a)+\gamma\sum_{z\in\mathcal Z}O(s,a,z)v^*(\tau(b,a,z))\right], \label{eq:bellman}
\end{align}
where $\tau(b,a,z)$ is the updated belief after taking action $a$ at belief state $b$ and observe $z$.

The optimal solution to the general POMDP can be found through the value iteration algorithm \cite{Sondik1971}. However, it is generally computationally infeasible to solve the POMDP optimally unless the problem size is very small \cite{Kaelbling1998}. Hence, in many practical problems, we are interested in obtaining heuristic solutions that can be solved efficiently. The most straightforward heuristic solution is to simply ignore the uncertainty of the states and choose the optimal MDP actions for the most likely state, i.e.,
\begin{align}
\pi_{MLS}(b)=\pi^{*}_{MDP}\left(\arg\max_{s}b(s)\right), \label{eq:huerstic1}
\end{align}
where $\pi^{*}_{MDP}(s)$ is the optimal policy of the corresponding fully observable MDP. However, this solution usually has very poor performance for problems with high uncertainty of the states. A more sophisticated approximation technique was introduced in \cite{Littman1995}, which chooses the action that maximizes the expected return given the current belief by assuming that the POMDP will become fully observable after taking this subsequent action. The corresponding policy is given by
\begin{align}
\pi_{MDP}(b)=\arg\max_{a\in\mathcal A}\sum_{s\in\mathcal S}b(s)\sum_{s'\in\mathcal S}\gamma T(s,a,s')v^*_{MDP}(s'),\label{eq:MDPsolu}
\end{align}
where $v^*_{MDP}(s')$ is the optimal value of state $s'$ in the fully observable MDP. Note that this policy assumes that there is no uncertainty over the state after taking one action, and hence it usually performs poorly in the scenarios where information gathering about the environment state is important, i.e., taking the actions that leads to less uncertainty of the states. Many other heuristic solutions to a general POMDP that achieve different tradeoffs between the complexity and performance can be found in \cite{Ross2008}.

For the minimum-latency FEC design problem considered in this paper, both heuristic solutions in \eqref{eq:huerstic1} and \eqref{eq:MDPsolu} perform poorly. To this end, we will first introduce an equally-efficient heuristic solution as compared with \eqref{eq:huerstic1} and \eqref{eq:MDPsolu}, named ``majority vote policy", which renders a significantly better latency performance. Furthermore, we will also derive tight lower and upper bounds for the optimal state values of the POMDP corresponding to the minimum-latency FEC design problem, based on which a more sophisticated $D$-step search algorithm can be implemented for obtaining near-optimal solutions.

\section{Minimum-Latency FEC Design without Feedback}\label{sec:NoFB}
In this section, we consider the design of FEC for sending a block of $N$ packets in the scenario without feedback from the receiver, aiming at achieving the minimum average end-to-end latency. In the absence of feedback, the propagation delay can be counted as a constant added to the average end-to-end latency $\bar{D}_{\mathrm{e2e}}$ in \eqref{eq:expectedDelay}, which does not affect the code design and hence it is neglected.

For simplicity, we assume that all the packets are generated at $t=0$, i.e., at the beginning of the first time slot. The proposed code design also applies to the scenario where the packets are sequentially generated at consecutive time slots, where the queueing delay for each packet is simply reduced by a constant value from the one for the case of all packets being generated at $t=0$. Denote by $t(i), i=1,..,N$, the time that the $i$th packet is delivered to the application layer at the receiver, i.e., the $i$th packet is delivered by the end of the time slot $t(i)$. Then, the end-to-end latency experienced by the $i$th packet is $D_{\mathrm{e2e}}(i)=t(i)$ and the average end-to-end latency in \eqref{eq:expectedDelay} is equivalently expressed as
\begin{align}
\bar{D}_{\mathrm{e2e}}=\frac{1}{N}\sum_{i=1}^{N}t(i). \label{eq:expecteDelayNoFB}
\end{align}

We assume that these $N$ packets are sent through an independent and identically distributed (i.i.d.) packet erasure channel, where a typical packet is erased with probability $p$ and received with probability $1-p$, $0<p<1$. Without loss of optimality for achieving the minimum end-to-end latency, we consider the code construction with the following assumptions:
 \begin{itemize}
 \item{The information packets are transmitted sequentially.}
 \item{At deliberately chosen time slots, the coded packets are sent for correcting the packet erasures, which are generated via RLNC from all the processed information packets in previous time slots.}
 \item{The transmission terminates when all the packets are successfully delivered to the application layer of the receiver.}
 \end{itemize}
Denote by $a_i$ the coding decision (i.e., the action) at time slot $i$, where
\begin{align}
a_i=\begin{cases}
0, & \textnormal{if an information packet is sent at time slot $i$,}\\
1, & \textnormal{if a coded packet is sent at time slot $i$}.
\end{cases}
\end{align}
At the $1$st time slot, it is obvious that we can only send the information packet, i.e., $a_1=0$. Since there are in total $N$ information packets, we have $\sum_{i=0}^{\infty}(1-a_i)=N$, i.e., action $a=0$ is chosen for exactly $N$ times.

In the absence of feedback, the code design can be performed off-line. Hence, the minimum-latency FEC can be obtained using brute-force search. Assume that the maximum number of coded packets prior to the $N$th information packet transmission does not exceed $K$. The total number of coding policies that we need to examine is ${N+K}\choose {K}$. If the channel erasure probability is $p$, a reasonable value of $K$ should be slightly greater than $\lceil\frac{Np}{1-p}\rceil$. The brute force search is computational prohibitive when $N$ is large. For example, for sending 20 information packets over a channel with erasure probability 0.3, the total number of coding policies to be evaluated is more than $3\times 10^7$.

\subsection{Mathematical Modeling}\label{sec:math_noFB}
In this subsection, we propose to model the minimum-latency FEC design problem as a POMDP, based on which efficient algorithms can be applied to obtain the optimal coding decisions. Each element of the POMDP, as discussed in Section~\ref{sec:POMDP}, will be elaborated in the following.

First, it is noted that the coding decision is  affected by the number of waiting packets at the receiver side, denoted as $w$, which is the number of information packets that have been sent at the transmitter side, but have not yet been delivered to the application layer of the receiver. This corresponds to the case when either the packet itself or its previous packet cannot be successfully decoded. If there is no waiting packet, i.e., $w=0$, it is obvious that the transmitter should immediately send the next information packet in the queue, i.e., $a_i=0$. On the other hand, when there are some packets waiting to be delivered, i.e., $w\geq 1$, coded packets should be sent by the transmitter to enable the receiver to decode the waiting packets as soon as possible. The number of coded packets that are required for successfully decoding all the waiting packets depends on the number of {innovative} packets already available at the receiver, which is denoted as $d$. With RLNC, the packets are decoded once the number of innovative packets collected by the receiver is equal to the number of waiting packets, i.e., $d=w$. Furthermore, the coding decision at the transmitter also depends on the number of packets that have already been sent by the transmitter, denoted as $n$. This is because the total increased queueing delay during a time slot is related to the number of packets waiting at the transmitter, i.e., $N-n$.  Based on the above discussions,  the states of the corresponding POMDP are characterized  by the tuple $s\triangleq(n,w,d)$, where $n,w$ and $d$ denote the number of transmitted information packets, the number of waiting packets and the number of innovative packets at the receiver, respectively, with $0\leq d<w\leq n\leq N$. The total number of states of this POMDP is thus given by
\begin{align}
|\mathcal S|=\sum_{n=0}^{N}\sum_{w=0}^{n}\sum_{d=0}^{w-1}1=N+\sum_{n=1}^{N}\frac{n(n+1)}{2}=N+\frac{N(N+1)(N+2)}{6}.
\end{align}

Next, we consider the action space and the reward function. The set of actions during each time slot is $\mathcal{A}=\{0,1\}$, i.e., sending an information packet when $a=0$ and a coded packet when $a=1$. In order to minimize the average end-to-end latency, we define the immediate reward for each action as the negative of the increased latency in $N\cdot \bar{D}_{\mathrm{e2e}}$ during that time slot, which is equivalent to the number of packets that have not yet been delivered to the application layer of the receivers at that time slot\footnote{If a packet is delivered by the end of a time slot, this time slot still counts into the latency of this packet, as illustrated in Fig.~\ref{F:delay}.}. In other words, the average end-to-end latency given in \eqref{eq:expecteDelayNoFB} is related with the accumulated reward as
\begin{align}
\bar{D}_{\mathrm{e2e}}=-\frac{1}{N}\sum_{t}R(s_t,a_t).
\end{align}
Thus, at state $s=(n,w,d)$, we have $R(s,a)=-(N-n+w), \forall a\in\mathcal{A}$, which includes the $(N-n)$ packets waiting in the queue at the transmitter side and the $w$ packets waiting in the buffer at the receiver side. Note that for each given state, while the action $a$ does not directly affect the immediate reward, it affects the state transition  (as given below) and hence will influence the reward indirectly for the next time slot.

At the initial state $s=(0,0,0)$, only an information packet can be sent. After all the information packets are sent, the system will be in the state $(N,\cdot,\cdot)$ and only coded packets can be sent for all subsequent time slots. Thus, for $n=0$ or $n=N$, the optimal action is trivially solved. On the other hand, for $0<n<N$, we can either send an information packet or a coded packet.  The state transition depends on the action applied and the channel realization, i.e., whether the packet is received or erased.  Specifically, if an information packet is sent, i.e., $a=0$, we have the following situations:
\begin{itemize}
\item{If $w=0$, there is no waiting packet at the receiver. In this case, if the information packet sent at the current time slot is received successfully, which has the probability of $1-p$, the state will transit to the next state $(n+1,0,0)$. On the other hand, if this information packet is erased, which has the probability of $p$, the next state will be $(n+1,1,0)$. }
\item{If $w>0$, some preceding packets are not decodable. In this case, if the information packet sent at the current time slot is received successfully, which has the probability of $1-p$, both the number of waiting packets and the number of innovative packets at the receiver are increased by 1, i.e., the state will transit to $(n+1,w+1,d+1)$. On the other hand, if this packet is erased, the state will transit to $(n+1,w+1,d)$. }
\end{itemize}
On the other hand, if a coded packet is sent, i.e., $a=1$, we have the following situations:
\begin{itemize}
\item{If $d<w-1$, there is no chance of decoding the waiting packets regardless whether the packet sent by the current time slot is received successfully or not. Furthermore, since there is no new information packet processed, the number of processed packets and the number of waiting packets will not change. If this coded packet is received successfully, the state will transit to $(n,w,d+1)$, otherwise it will remain at $(n,w,d)$. }
\item{If $d=w-1$, we can decode all the waiting packets if this coded packet is received successfully, for which the state will be transited to $(n,0,0)$. If this coded packet is erased, the state will remain at $(n,w,d)$.   }
\end{itemize}
The transition of states for this POMDP is summarized in Fig.~\ref{F:transition}. When all the packets are successfully delivered, the transition diagram terminates at the final state $(N,0,0)$.
\begin{figure}[htb]
\centering
\includegraphics[scale=0.7]{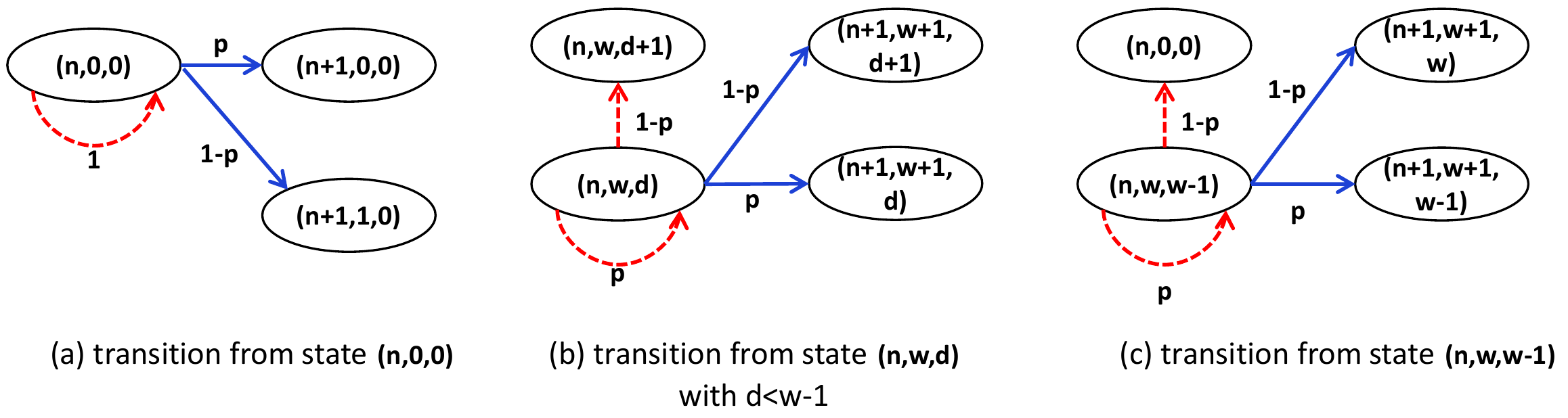}
\caption{An illustration of the state transitions. The blue lines denote transitions after taking action $a=0$ and the red dashed lines denote transitions after taking action $a=1$, with transition probabilities labeled on the lines. }
\label{F:transition}
\end{figure}

As a result, we model the minimum-latency code design as a general MDP, which includes
\begin{itemize}
\item{The state space $\mathcal{S}=\{(n,w,d),0\leq d<w\leq n\leq N\}$,}
\item{The action space $\mathcal{A}=\{0,1\}$,}
\item{The state transition function
\begin{equation}\label{eq:transitionFun}
\begin{aligned}
&T((n,w,d),0,(n',w',d'))=
\begin{cases}
1-p, & \textnormal{if $n'=n+1,w'=w=0$ or $n'=n+1,w'=w+1$, }\\
p, &\textnormal{if $n'=n+1,w'=w+1,d'=d$,}\\
0, &\textnormal{otherwise}
\end{cases}\\
&T((n,w,d),1,(n',w',d'))=
\begin{cases}
1-p, & \textnormal{if $n'=n,w'=w,d'=d+1$ or $n'=n,d=w-1,w'=0$, }\\
p, &\textnormal{if $n'=n,w'=w,d'=d$,}\\
0, &\textnormal{otherwise}
\end{cases}
\end{aligned}
\end{equation}
}
\item{The reward function $R((n,w,d),a)=-(N-n+w),\forall a\in\mathcal{A}$.}
\end{itemize}

We refer to the above MDP by $\mathrm{MDP}_{1}$ in the following context. Since the states in $\mathrm{MDP}_{1}$ are not directly observable at the transmitter, this problem is classified as POMDP. Note that the transmitter can directly observe the number of transmitted  packets in $\mathrm{MDP}_{1}$, i.e., the parameter $n$ is known exactly by the transmitter. On the other hand, the parameters $w$ and $d$ correspond to the status of the receiver, which remains unknown for the transmitter in the absence of feedback. The transmitter can estimate the joint distribution of $w$ and $d$ based on the historical actions. Specifically, when $n$ information packets are transmitted, there are in total $\frac{n(n+1)}{2}+1$ possible states, which are
\begin{align}
\mathcal{S}_n&=\{(n,w,d), 0\leq d<w\leq n\}\nonumber\\
&=\{(n,0,0),(n,1,0),(n,2,1)(n,2,0),\cdots,(n,w,w-1),\cdots,(n,w,0),\cdots,(n,n,0)\}.
\end{align}
At the beginning of the $t$th time slot, if $n$ information packets have been sent, we represent the belief state $b_t$ by a compact vector $\mathbf{b}_t^n\in\mathbb{R}^{1\times\left(\frac{n(n+1)}{2}+1\right)}$. For notational convenience, we denote by $\mathbf{b}_t^n(w,d), 0\leq d<w\leq n$, the probability for being in state $(n,w,d)$.

At the beginning of the $1$st time slot, the system starts from the initial state $(0,0,0)$ and the initial belief is $\mathbf{b}_1^0=[1]$. Then, based on the state transition shown in Fig.~\ref{F:transition}, the belief vector $\mathbf{b}_t^n$ is updated according to the following rules:
\begin{itemize}
\item{If an information packet is sent, i.e., $a=0$, we have new belief vector $\mathbf{b}_{t+1}^{n+1}$ given by
    \begin{equation}\label{eq:beliefUpdate1}
    \begin{aligned}
    &\mathbf{b}_{t+1}^{n+1}(0,0)=(1-p)\mathbf{b}_t^n(0,0),\\
    &\mathbf{b}_{t+1}^{n+1}(w,0)=p\mathbf{b}_t^n(w-1,0),\forall w\geq 1,\\
    &\mathbf{b}_{t+1}^{n+1}(w,d)=(1-p)\mathbf{b}_t^n(w-1,d-1)+p\mathbf{b}_t^n(w-1,d). \  \forall w\geq2\textnormal{ and } d<w-1,\\
    &\mathbf{b}_{t+1}^{n+1}(w,w-1)=(1-p)\mathbf{b}_t^n(w-1,w-2),\ \forall w\geq 2.
    \end{aligned}
    \end{equation}
    }
\item{If a coded packet is sent, i.e., $a=1$, the belief vector $\mathbf{b}_t^n$ is updated as
    \begin{equation}\label{eq:beliefUpdate2}
    \begin{aligned}
    \mathbf{b}_{t+1}^n(0,0)&=\mathbf{b}_t^n(0,0)+(1-p)\sum_{w=1}^{n}\mathbf{b}_t^n(w,w-1),\\
    \mathbf{b}_{t+1}^n(w,0)&=p\mathbf{b}_t^n(w,0),\forall w\geq 1,\\
    \mathbf{b}_{t+1}^n(w,d)&=(1-p)\mathbf{b}_t^n(w,d-1)+p\mathbf{b}_t^n(w,d).\ \forall w\geq 1.
    \end{aligned}
    \end{equation}
}
\end{itemize}

\begin{example}
Consider the simple case where only two packets need to be sent, i.e., $N=2$. The corresponding MDP is illustrated in Fig.~\ref{F:example}. At the initial state $(0,0,0)$, we send out the first information packet. Following that, we can choose either to send a repetition of the first packet (considered as a coded packet) or send the second information packet. After two information packets have been sent, the system will be in the state $(2,\cdot,\cdot)$ from which only action $a=1$ can be applied, i.e., sending a linear combination of these two packets. The probability and the immediate reward are labelled with each state transition. The belief state can be updated based on the historical actions according to \eqref{eq:beliefUpdate1} and \eqref{eq:beliefUpdate2}. For example, after sending the first information packet from the initial state $(0,0,0)$, we have the belief vector $\mathbf{b}_2^1=[1-p,p]$, where $p$ is the packet erasure probability. Following from that, if a coded packet is sent (basically the repetition of the first information packet), the belief vector is updated as $\mathbf{b}_3^1=[1-p^2,p^2]$. On the other hand, if the second information packet is sent immediately after the first one, the belief vector is $\mathbf{b}_3^2=[(1-p)^2,p(1-p),p(1-p),p^2]$.
\begin{figure}[htb]
\centering
\includegraphics[scale=0.8]{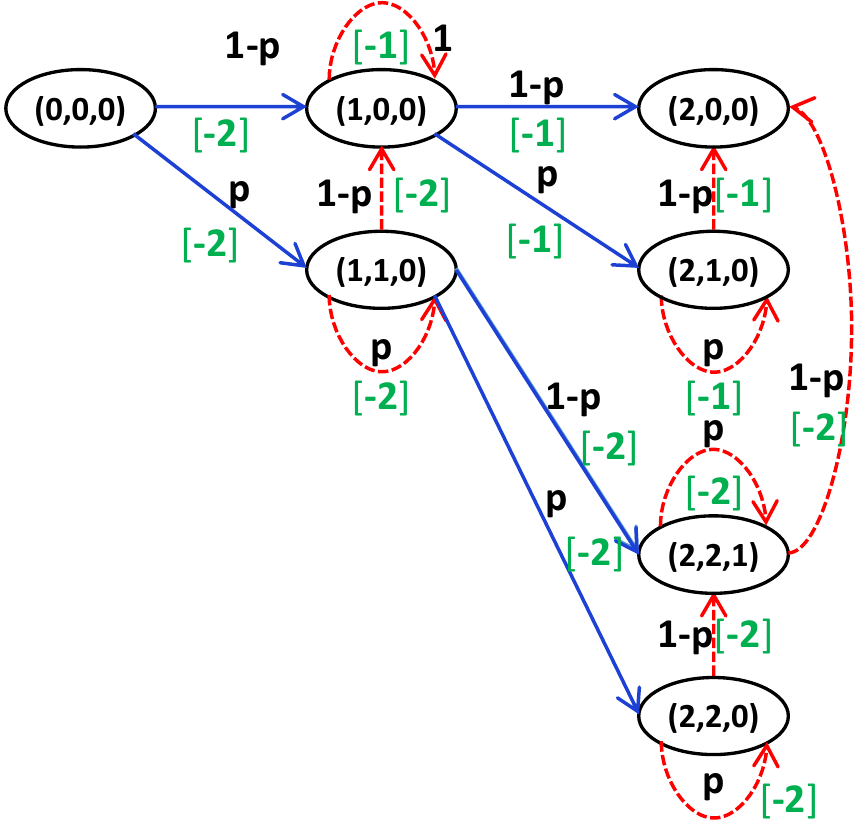}
\caption{The state transition diagram for a minimum-latency code design with $N=2$. The blue lines denote transitions after taking action $a=0$ and the red dashed lines denote transitions after taking action $a=1$. The transition probabilities and the immediate reward associated with each transition are labeled on the lines. }
\label{F:example}
\end{figure}
\end{example}


\subsection{Efficient Algorithms}\label{sec:alg_noFB}
The optimal solution to the general POMDP can be found through the value iteration algorithm \cite{Sondik1971}. However, the number of policies that need to be examined in the value iteration algorithm grows exponentially with the number of packets to be delivered, and hence the complexity for implementing this optimal algorithm is prohibitive unless $N$ is very small, e.g., $N<10$. Meanwhile, we note that those heuristic policies discussed in Section~\ref{sec:POMDP} lead to a poor performance for our considered problem. Hence, in this subsection, we propose novel efficient algorithms that can achieve near-optimal solutions for the considered POMDP.

First, it is noted that if the perfect knowledge of the state is available, the optimal action is straightforward, i.e., an information packet should be sent whenever there is no waiting packet at the receiver side, i.e., when $w=0$; while a coded packet should be sent as long as  there is a waiting packet i.e., when $w>0$. Hence, a straightforward solution for the considered POMDP is to choose the most likely optimal action based on the belief state, which is referred to as \emph{majority vote policy}. Denote by $\mathcal{S}_a\triangleq\{s:\pi^*_{MDP}(s)=a\}$ the set of states for which the optimal action is given by $a\in\mathcal A$ in the fully observable MDP. Then, the majority vote policy for the general POMDP is defined as
\begin{align}
\pi_{MV}(b)=\underset{a\in\mathcal{A}}{\arg\max}\sum_{\mathcal{S}_a}b(s).\label{eq:MVGeneral}
\end{align}
Note that \eqref{eq:MVGeneral} is an alternative maximum likelihood solution similar to \eqref{eq:huerstic1}. The solution in \eqref{eq:huerstic1} choose the optimal action that maps to the most likely state, i.e., one state, while \eqref{eq:MVGeneral} choose the action that is optimal for most of the states, weighted by the state probabilities. Hence, the solution in \eqref{eq:MVGeneral} usually provides a better performance when the action space is small

For the minimum-latency code design problem considered in this paper, the action space is $\mathcal A=\{0,1\}$. The majority vote policy can be equivalently expressed as
\begin{align}
\pi_{MV}(\mathbf{b}_t^n)=
\begin{cases}
0, & \textnormal{if $\mathbf{b}_t^n(0,0)>0.5$}\\
1, & \textnormal{otherwise}.
\end{cases}\label{eq:MVPolicy}
\end{align}
The majority vote policy focuses on the local optimal action at each time instance, while ignoring the effect of the current action on the future states. The performance can be further enhanced by looking a few steps ahead, by adopting the \emph{$D$-step search} algorithm proposed in \cite{Ross2008}. With the $D$-step search algorithm, a tree of reachable belief state from the current belief state is built by examining all the possible sequences of $D$ actions/observations\footnote{For the low-latency code design problem, the observation is available for the case with delayed feedback. In this section, we focus on the scenario without feedback, where the observation space is empty. The low-latency code design with delayed feedback is considered in Section~\ref{sec:DelayedFB}.} that can be taken/observed from the current belief. The belief nodes are represented using logical $\mathrm{OR}$-nodes, at which we choose one action from the action space, while the actions are included in between each layer of belief nodes using logical $\mathrm{AND}$-nodes, since all the possible subsequent observations must be considered. Then, to make the decision based on the root belief node, we first estimate the values of the belief nodes at the fringe of this tree using an approximated value function computed offline. Then, the state action values of the upper layers are estimated based on the value iterations defined in \eqref{eq:bellman}. Specifically, in the absence of feedback, the value iteration function is simplified to
\begin{align}
v(b_t)=\max_{a\in\mathcal{A}}\left[\sum_{s\in\mathcal{S}}b_t(s)R(s,a)+v(\tau(b_t,a,z))\right],
\end{align}
where $\tau(b_t,a,\cdot)$ is the belief update function without observation. The most important task for implementing the $D$-step search algorithm is to obtain a good approximated value function of the fringe belief nodes.

First, it is noted that for the minimum-latency code design problem, there is only one possible action after all the $N$ information packets are sent at the transmitter, i.e., sending the coded packets. In this case, the optimal value of the states $(N,w,d)$ can be evaluated according to Lemma~\ref{lem:finalValue} given below. For notational convenience, we denote the value function of the state $v(s)$ for $s=(n,w,d)$ by $v(n,w,d)$, when there is no ambiguity.
\begin{lemma}\label{lem:finalValue}
For the minimum-latency code design problem with $N$ packets, the optimal value of the state $(N,w,d)$ is $v^*(N,w,d)=-\frac{w(w-d)}{1-p}$ for $0\leq d<w\leq N$, where $p$ is the packet erasure probability of the channel.
\end{lemma}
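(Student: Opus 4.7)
The plan is to exploit the fact that at any state $(N,w,d)$ all information packets have already been transmitted, so the only admissible action is $a=1$ (sending a coded packet). This removes the maximization in the Bellman equation and reduces the problem to a deterministic linear recursion in $d$, which can then be solved by a short downward induction.

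First, I would write down the dynamics implied by the transition function in~\eqref{eq:transitionFun} restricted to $n=N$: with probability $1-p$ the coded packet is innovative, advancing the state to $(N,w,d+1)$ when $d<w-1$, or emptying the receiver buffer entirely to the terminal state $(N,0,0)$ when $d=w-1$; with probability $p$ the packet is erased and the state is unchanged. The per-slot reward at any state $(N,w,d)$ equals $R=-(N-N+w)=-w$, independent of the action. Noting that $(N,0,0)$ is terminal with $v^*(N,0,0)=0$, the Bellman equation collapses to
\begin{align*}
v^*(N,w,w-1) &= -w + p\,v^*(N,w,w-1) + (1-p)\cdot 0,\\
v^*(N,w,d)   &= -w + p\,v^*(N,w,d) + (1-p)\,v^*(N,w,d+1), \quad 0\le d<w-1.
\end{align*}

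Second, I would resolve the base case $d=w-1$ by solving the first equation for $v^*(N,w,w-1)$, which gives $-w/(1-p)$, matching the claimed formula since $w-d=1$. Then I would argue by downward induction on $d$: assuming $v^*(N,w,d+1)=-w(w-d-1)/(1-p)$, the second equation yields
\begin{align*}
(1-p)\,v^*(N,w,d) \;=\; -w + (1-p)\cdot\Bigl(-\tfrac{w(w-d-1)}{1-p}\Bigr) \;=\; -w\bigl(1+(w-d-1)\bigr) \;=\; -w(w-d),
\end{align*}
so that $v^*(N,w,d)=-w(w-d)/(1-p)$, completing the induction.

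As a sanity check (and an alternative derivation worth mentioning), I would note that the number of slots required to collect the remaining $w-d$ innovative packets is a negative-binomial random variable with mean $(w-d)/(1-p)$, and during all those slots the incurred penalty per slot is exactly $w$; multiplying gives the same closed form $-w(w-d)/(1-p)$, which also confirms optimality since no alternative action is available. The main obstacle is really only bookkeeping: isolating the terminal transition at $d=w-1$ correctly so that the recursion has a well-defined base, and ensuring that $v^*(N,0,0)=0$ is used as the terminal value rather than being folded into the recursion.
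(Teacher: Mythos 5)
Your proof is correct and follows essentially the same route as the paper's: it isolates the terminal transition at $d=w-1$ as the base case, solves the resulting fixed-point equation for $v^*(N,w,w-1)=-\tfrac{w}{1-p}$, and then unrolls the linear recursion in $d$ to obtain $-\tfrac{w(w-d)}{1-p}$. The negative-binomial sanity check is a nice addition but not needed; the argument stands as is.
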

\begin{proof}
Please refer to Appendix~\ref{A:finalValue}.
\end{proof}
Given the optimal value function of all the states $\{(N,w,d),0\leq d<w\leq N\}$, we can obtain the optimal value of the belief $\mathbf{b}_t^{N}$ according to \eqref{eq:ValueBelief}. However, the optimal value function for the belief states $\mathbf{b}_t^{n}$ with $n<N$ remains unknown. To find the approximated value functions for the general belief state, we consider the upper and lower bounds for the optimal value function.


Note that a lower bound $v^{L}(b_t)$, where $v^*(b_t)\geq v^{L}(b_t)$ of the optimal belief value can be obtained based on the majority vote policy given in \eqref{eq:MVPolicy}. Given the belief state $b_t$, all the subsequent actions and belief states can be determined based on the majority vote policy, until reaching the belief state $b_{t'}^{N}$, whose optimal value is known from Lemma~\ref{lem:finalValue}. Mathematically, the lower bound of the belief value can be calculated recursively as
\begin{align}
v^{L}(b_t)=\sum_{s\in\mathcal{S}}b_t(s)R(s,a)+v(b_{t+1}),\label{eq:ValueLB}
\end{align}
where $a=\pi_{MV}(b_t)$ is the action obtained by applying majority vote policy and $b_{t+1}=\tau(b_{t},a,\cdot)$ is the subsequent belief.

On the other hand, an upper bound of the optimal belief value $v^{U}(b_t)$, where $v^*(b_t)\leq v^{U}(b_t)$, can be obtained by assuming that the states in $\mathrm{MDP}_{1}$ are fully observable and the optimal action is taken at every step. Then, the optimal value of the state is given in Lemma~\ref{lem:OptValues} below.
\begin{lemma}\label{lem:OptValues}
If $\mathrm{MDP}_{1}$ is fully observable, the optimal value of state $(n,w,d),\forall  0\leq d<w\leq n\leq N$, is given by
\begin{align}
v_{MDP}^{*}(n,w,d)=-\frac{(N-n+w)(w-d)}{1-p}-\frac{(N-n+1)(N-n)}{2(1-p)}.\label{eq:OptValue}
\end{align}
\end{lemma}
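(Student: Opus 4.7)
The plan is to verify that the proposed function $V(n,w,d) \triangleq -\frac{(N-n+w)(w-d)}{1-p} - \frac{(N-n+1)(N-n)}{2(1-p)}$ satisfies Bellman's optimality equation for $\mathrm{MDP}_{1}$ with $\gamma=1$ (the reward process terminates at the absorbing state $(N,0,0)$ in almost surely finite expected time, so undiscounted sums are well defined). One also needs the auxiliary boundary value $V(n,0,0)=-\frac{(N-n+1)(N-n)}{2(1-p)}$, i.e.\ the same formula at $w=d=0$, because the transitions from $(n,w,w-1)$ under $a=1$ land in states with $w=0$. The conjectured optimal policy is greedy and transparent in the fully observable setting: send a coded packet ($a=1$) whenever $w\geq 1$, which keeps $n$ fixed while driving the decoding deficit $w-d$ to zero as fast as possible; send a fresh information packet ($a=0$) whenever $w=0$, since coded packets are useless there. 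I will verify both the value formula and this policy by backward induction on $n$.

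The base case $n=N$ is immediate from Lemma~\ref{lem:finalValue}, whose formula $-\frac{w(w-d)}{1-p}$ coincides with $V(N,w,d)$ because the second term of $V$ vanishes at $n=N$. Given the hypothesis at $n+1$, I will first establish the boundary value at $(n,0,0)$. Applying \eqref{eq:transitionFun} under $a=0$ yields $v(n,0,0) = -(N-n) + (1-p)\,V(n+1,0,0) + p\,V(n+1,1,0)$, and substituting the induction hypothesis (which gives $V(n+1,1,0)=-\tfrac{N-n}{1-p}+V(n+1,0,0)$) collapses the right-hand side directly to $V(n,0,0)$. Action $a=1$ at $(n,0,0)$ is dominated by $a=0$, because it leaves the state unchanged while incurring the same immediate cost, so the Bellman optimum at $(n,0,0)$ is indeed attained by $a=0$.

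Next, fix $n<N$ and perform an inner induction on $d$ descending from $w-1$ to $0$, for each $w\geq 1$. Under $a=1$ the self-loop of probability $p$ in \eqref{eq:transitionFun} allows Bellman's equation to be rearranged into
\begin{equation}
v(n,w,d) = -\frac{N-n+w}{1-p} + v(n,w,d+1), \qquad d<w-1,
\end{equation}
and at the decoding boundary $v(n,w,w-1) = -\frac{N-n+w}{1-p} + v(n,0,0)$. Both identities reduce to one-line algebraic checks, since by inspection $V(n,w,d)-V(n,w,d+1)=-\tfrac{N-n+w}{1-p}$ and $V(n,w,w-1)-V(n,0,0)=-\tfrac{N-n+w}{1-p}$. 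To confirm that $a=1$ is strictly optimal when $w\geq 1$, I compute the competing $Q$-value for $a=0$ from the inductive hypothesis at $n+1$ and show $Q(n,w,d;0) - V(n,w,d) = -\tfrac{w}{1-p} < 0$, which completes the Bellman verification.

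The main obstacle is purely bookkeeping: one must sequence the nested induction (outer on $n$, inner on $d$) so that every right-hand side of a Bellman equation invokes only already-established values, and one must correctly handle the self-loop transitions that produce the $1/(1-p)$ factors after rearrangement. Once the decomposition of $V$ into a ``finish decoding the current undecoded batch'' term proportional to $w-d$, and a ``process the remaining $N-n$ information packets'' term independent of $w,d$, is noticed, each Bellman identity collapses to direct algebra.
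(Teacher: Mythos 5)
Your proof is correct and follows essentially the same route as the paper: guess the policy (send an information packet iff $w=0$), solve the resulting value recursions by backward induction on $n$ with an inner descent on $d$, use Lemma~\ref{lem:finalValue} as the base case, and divide out the probability-$p$ self-loops to produce the $1/(1-p)$ factors. The one place you go beyond the paper is the explicit $Q$-value comparison $Q(n,w,d;0)-V(n,w,d)=-\tfrac{w}{1-p}<0$ (together with the domination argument at $(n,0,0)$), which actually verifies Bellman optimality of the asserted policy, whereas the paper simply states that the optimal policy ``can be easily determined'' and only evaluates the value function under it.
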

\begin{proof}
Please refer to Appendix~\ref{A:OptValue}.
\end{proof}

Then, the upper bound of the belief value $v^{U}(b_t)$ can be calculated as
\begin{align}
v^{U}(b_t)=\sum_{s\in\mathcal S}b_t(s)v_{MDP}^{*}(s),\label{eq:ValueUB}
\end{align}
where $v_{MDP}^{*}(s)$ is the optimal value of state $s$ in the fully observable MDP as given in \eqref{eq:OptValue}.

With the $D$-step search algorithm, we calculate the value of the fringe nodes of the belief tree offline, e.g., the lower and upper bound values of the fringe nodes can be calculated from \eqref{eq:ValueLB} and \eqref{eq:ValueUB}, respectively. Then, these bounds are propagated to the parent nodes according to the following equation
\begin{align}
&L(b)=\begin{cases}
v^{L}(b), & \textnormal{if $b$ is a fringe node}\\
\max_{a\in\mathcal{A}}L(b,a), & \textnormal{otherwise},
\end{cases}\label{eq:lowerBound}\\
&\textnormal{with }L(b,a)\triangleq\sum_{s\in\mathcal S}b(s)R(s,a)+L(\tau(b,a,\cdot));\\
&U(b)=\begin{cases}
v^{U}(b), & \textnormal{if $b$ is a fringe node}\\
\max_{a\in\mathcal{A}}U(b,a), & \textnormal{otherwise},
\end{cases}\label{eq:upperBound}\\
&\textnormal{with }U(b,a)\triangleq\sum_{s\in\mathcal S}b(s)R(s,a)+U(\tau(b,a,\cdot)),\label{eq:beliefActionValue}
\end{align}
where $R(s,a)$ is the immediate reward for implementing action $a$ at state $s$, and  $\tau(b_t,a,\cdot)$ is the updated belief with action $a$ taken on the current belief $b_t$.

After obtaining the upper and lower bounds for the current belief state $b_t$, we can take the action that leads to the maximal approximated belief value. In this paper, we approximated the optimal value of $b_t$ with the lower bound $L(b_t)$. Hence, the policy can be expressed as
\begin{align}
\pi_{D-\mathrm{step}}(b_t)=\underset{a\in\mathcal A}{\arg\max}\ L(b_t,a).\label{eq:DStepPolicy}
\end{align}

Note that the size of the belief tree increases exponentially with the number of steps we look ahead, and hence the complexity of the $D$-step search algorithm increases exponentially with $D$. The complexity for implementing $D$-step search algorithm is equivalent to evaluate $2^D$ policies, which is far less than the brute search algorithm, which needs to evaluating ${N+K}\choose{K}$ policies, e.g., for $N=20$ and $p=0.3$, the $4$-step search algorithms is equivalent to evaluate $16$ policies while the brute force search needs to evaluate over $3\times 10^7$ policies. With the lower bound and upper bound obtained in \eqref{eq:lowerBound}-\eqref{eq:beliefActionValue}, we can reduce the size of the belief tree by applying the classical branch-and-bound pruning algorithm \cite{Ross2008}. Specifically, for a given belief state, if the upper bound for taking action $a$ is lower than the lower bound for taking another action $a'$, i.e., $U(b,a)\leq L(b,a')$, then action $a$ is strictly suboptimal in this belief state, and hence that branch (action $a$ and the subsequent reachable belief states) can be pruned without performance degradation. The pseudocode for general Branch-and-Bound pruning algorithm can be found in \cite{Ross2008} (Algorithm 3.2 on page 12).

\begin{example}
Consider the simple case with $N=2$. The state transition diagram for the minimum latency code design problem is shown in Fig.~\ref{F:example}. After the first information packet is sent, the belief state is $b_2=\mathbf{b}_2^1=[1-p,p]$, where $p$ is the erasure probability. For the second time slot, we can either send the next information packet, i.e., with $a=0$, or send a repetition of the first information packet, i.e., with $a=1$. With majority vote, the action $a=1$ should be taken if $p>0.5$; otherwise, action $a=0$ should be taken.

With the 2-step search algorithm, the decision can be made by building a belief tree as shown in Fig.~\ref{F:tree_noFB}, by considering all the possible actions and reachable belief states in the subsequent two steps. The fringe nodes $\mathbf{b}_3^2$ and $\mathbf{b}_4^2$ correspond the the set of states $\{(N,w,d), 0\leq d<w\leq N\}$, whose optimal values are directly obtained from Lemma~\ref{lem:finalValue}. Hence, we can calculate the optimal value of the belief $\mathbf{b}_3^2$ and $\mathbf{b}_4^2$ as $v^*(\mathbf{b}_3^2)=-3p-\frac{4 p^2}{1-p}$ and $v^*(\mathbf{b}_4^2)=-p-3p^2-\frac{4p^3}{1-p}$. Meanwhile, for the value of $\mathbf{b}_4^1$, we can calculate the upper and lower bound from \eqref{eq:ValueUB} and \eqref{eq:ValueLB}, respectively, which are\footnote{It is generally difficult to give the explicit expression for \eqref{eq:ValueLB} as a function of $p$, except in this simple case with $N=2$. However, the numerical value of the lower bound can be calculated very efficiently according to \eqref{eq:ValueLB}, with complexity $\mathcal{O}(N)$.}
\begin{align}
v^{U}(\mathbf{b}_4^1)=-\frac{1+2p^3}{1-p};\  v^{L}(\mathbf{b}_4^1)=
\begin{cases}
-1-3p^3-\frac{p(1-p^3)}{1-p}-\frac{4p^4}{1-p}, & \textnormal{if $p^3<0.5$}\\
-\frac{p(1-p^m)}{1-p}-2p^m-\frac{4p^{m+1}}{1-p}-\sum_{i=3}^{m}(1+p^i), & \textnormal{otherwise}
\end{cases}.
\end{align}
where $m=\underset{u}{\arg\min}(p^u<0.5)$.

The lower and upper bounds can be propagated to the parent $\mathbf{b}_3^1$ according to \eqref{eq:lowerBound} and \eqref{eq:upperBound}. For the example with $p=0.3$, the values for each node are labeled in Fig.~\ref{F:tree_noFB}. Under the belief $\mathbf{b}_2^1$, action $a=1$ is strictly suboptimal than the action $a=0$, and hence the branch from that action node can be pruned. Then, we have $L(\mathbf{b}_3^1)=U(\mathbf{b}_3^1)=-1.09+v^*(\mathbf{b}_4^2)=-1.8143$. Based on the belief tree, we have $L(\mathbf{b}_2^1,0)>L(\mathbf{b}_2^1,1)$, which implies that action $a=0$ should be taken at $\mathbf{b}_2^1$. For this simple example with $N=2$ and $p=0.3$, we have $\pi_{MV}=\pi_{D-\mathrm{step}}$ and both of them are optimal. However, for general $N$ and $p$, the $D$-step search policy usually outperforms the majority vote policy, as shown in Section~\ref{sec:simu}.
\begin{figure}[htb]
\centering
\includegraphics[scale=0.55]{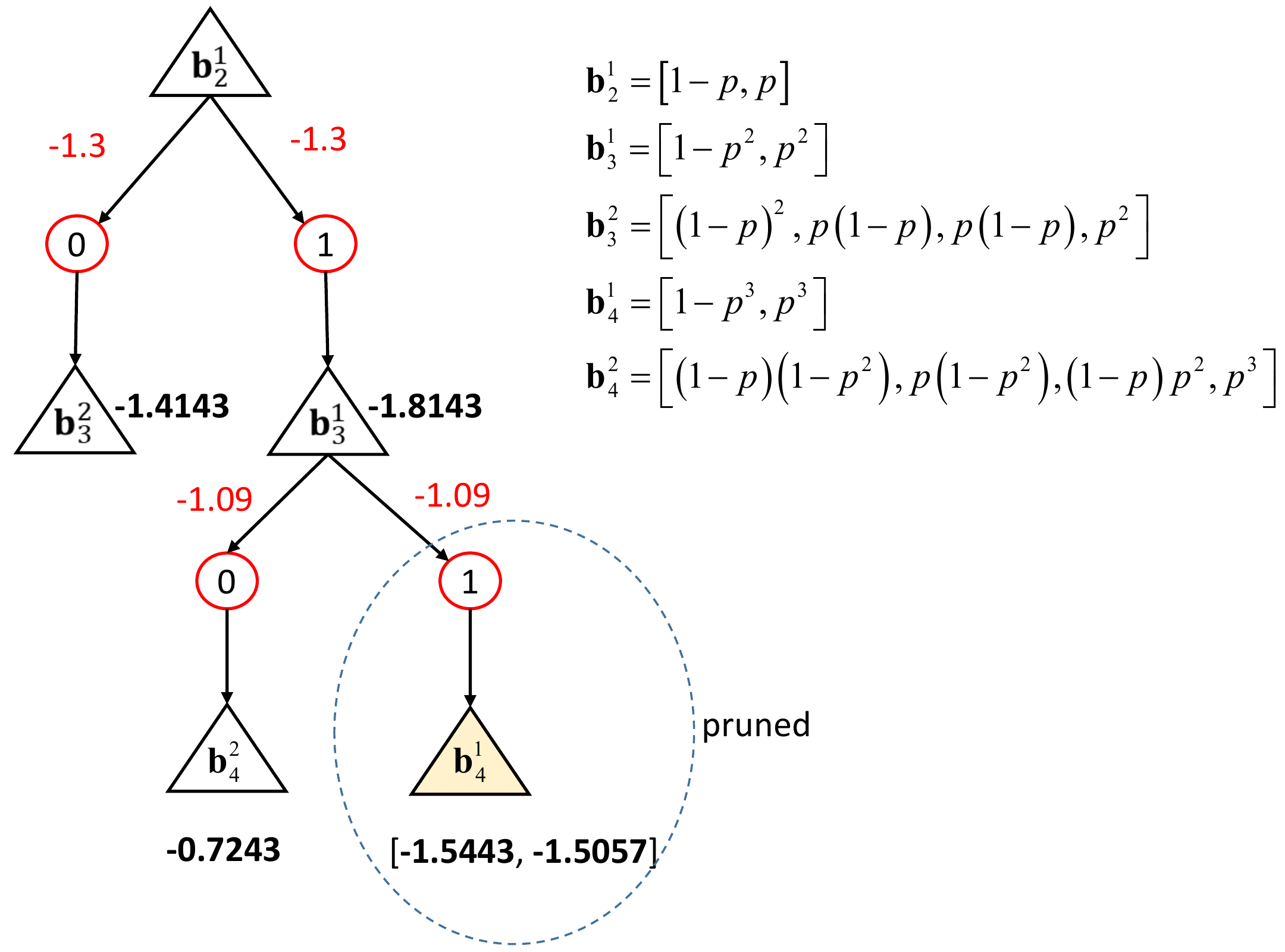}
\caption{The 2-step belief tree for $N=2$ at $t=2$. The triangles denote the belief states and the circles denote the actions. For $p=0.3$, the expected immediate reward for belief-action pairs are labeled with red text along the lines, and the (range of the) optimal belief state values are labeled with bold black text.    }
\label{F:tree_noFB}
\end{figure}
\end{example}

\section{Minimum-Latency FEC Design with Delayed Feedback}\label{sec:DelayedFB}
In this section, we consider the minimum-latency code design in the presence of delayed feedback, which specifies whether a packet is successfully received at the receiver or not. Note that when the feedback is instantaneous, i.e., the receiving status of the packet sent during the $i$th time slot is available at the transmitter for making the decision at the $(i+1)$th time slot, the state in $\mathrm{MDP}_{1}$ is known exactly and hence we can implement the optimal actions following ARQ, which is given by
\begin{align}
\pi^*_{MDP}(s=(n,w,d))=\begin{cases}
0, & \textnormal{if $w=0$}\\
1, & \textnormal{otherwise}.
\end{cases}
\end{align}
When the feedback is delayed due to, e.g., the propagation or processing delays, the transmitter cannot tell the state in $\mathrm{MDP}_1$ exactly since the receiving status of some packets remains unknown. However, the delayed feedback information can still be used to reduce the uncertainty in the belief, and hence help the transmitter make better decisions. In the section, we assume that the receiving status of the $i$th transmission (i.e., the packet sent during time slot $i$) is available at the transmitter via the feedback channel at the beginning of the $(i+T+1)$th time slot, where $T$ is referred to as the ``feedback delay". $T=0$ corresponds to the case of instantaneous feedback.


\subsection{Mathematical Modeling}\label{sec:math_FB}
In the delayed feedback case, at the beginning of the $t$th time slot, the transmitter can observe the receiving status of the packet sent during the $(t-T-1)$th time slot via the feedback channel, where $T$ is the feedback delay. Hence, we can model the low latency code design with delayed feedback by a POMDP, which includes $\mathrm{MDP}_1$ and
\begin{itemize}
\item{The set of observation $\mathcal{Z}=\{0,1\}$, where $z_{t-T-1}^{t}\in\mathcal Z$ is the receiving status of the packet sent during the $(t-T-1)$th time slot and observed by the transmitter at the beginning of the $t$th time slot.  }
\item{The observation function $O(z)$, which is independent of the current state and is given by
\begin{align}
O(z)=\begin{cases}
1-p, & \textnormal{if $z=1$}\\
p, & \textnormal{if $z=0$}.
\end{cases}
\end{align}
}
\end{itemize}

In other words, at the beginning of the $t$th time slots, the transmitter has the knowledge of the receiving status of all the packets sent during the time slot $\{1,...,t-T-1\}$. Hence, we know the exact state of $\mathrm{MDP}_1$ at the beginning of the $(t-T)$th time slot, which is denoted as $\hat{s}_{t-T}$ for convenience. Then, we can update the belief state at the $t$th time slot based on $\hat{s}_{t-T}$ and the set of subsequent actions $\{a_i, i=t-T,...,t-1\}$ from the state transition diagram shown in Fig.~\ref{F:transition}. For notational convenience, we denote by $\hat{b}_i^{t}$ our estimation of belief state at time $i$, based on the information available at time $t$. Then we have
\begin{align}
\hat{b}_{t-T}^t(s)=
\begin{cases}
1, & \textnormal{if $s=\hat{s}_{t-T}\in\mathcal S$}\\
0, & \textnormal{otherwise}.
\end{cases}
\end{align}
Moreover, we can also estimate the belief at time $\{t-T+i, i=1,...,T\}$ via the following recursive formula
\begin{align}
b_{t-T+i}^t(s)=\sum_{s'\in\mathcal S}T(s,a_{t-T+i-1},s')\hat{b}_{t-T+i-1}^t(s'), \label{eq:updateBelief_FB}
\end{align}
where $a_{t-T+i-1}\in\mathcal A$ is the action applied during $(t-T+i-1)$th time slot and $T(s,a,s')$ is the state transition function for $\mathrm{MDP}_1$ given in \eqref{eq:transitionFun}.

In the absence of feedback, our belief state $b_t$ is obtained from the initial belief, i.e., $\mathbf{b}_1^{0}(0,0)=1$, and all the actions from time 1 until time $t-1$. With delayed feedback, at time $t$, the transmitter can make the decision based on the belief $\hat{b}_t^t$ from \eqref{eq:updateBelief_FB}, which has much less uncertainty as compared with $b_t$ in the absence of feedback,


\vspace{-0.2in}
\subsection{Efficient Algorithms}\label{sec:alg_FB}
The majority vote policy defined in \eqref{eq:MVPolicy} can be directly applied to the case with delayed feedback by using the more accurate belief state $\hat{b}_t^t$ at time $t$ as input in \eqref{eq:MVPolicy}. In this section, we focus on the extension of the $D$-step search algorithm to the case with delayed feedback. At each time slot, a belief tree by looking $D$ steps ahead will be built to assist the decision making process. According to the availability of feedback information, we have the following three cases:
\begin{itemize}
\item{For time slot $t\in\{1,2,...,T-D+1\}$, there is no feedback available even after looking at $D$ steps ahead. Hence, the belief tree is exactly the same as the case without feedback, which is discussed in Section~\ref{sec:alg_noFB}. Each action node is followed by one belief node.  }
\item{For time slot $t\in\{T-D+2,...,T+1\}$, there is no observation at the beginning of the $t$th time slot. Hence, $\hat{b}_t^t$ can be calculated based on the historical actions only. The first part of the tree follows from the case without feedback, where each belief node is expanded to two belief nodes. However, some feedback information will be received in the future steps that we look ahead. Specifically, there are two possible observations at the beginning of the $(T+2)$th time slot and all the subsequent time slots, i.e., $z=0$ or $z=1$. For each observation, we can estimate the corresponding updated belief. Then, for each belief, we have two possible actions, i.e., $a\in\{0,1\}$, and each followed by two possible observations again. Hence, for the second part of the tree, each belief node is expanded into 4 belief nodes in the next step.}
\item{For time slot $t\in\{T+2,...\}$, we calculate $\hat{b}_t^t$ based on the observation according to \eqref{eq:updateBelief_FB}. At each belief node, there are two action nodes, each followed by two observations. Hence, each belief node is expanded to 4 belief nodes in the next step. }
\end{itemize}

Regardless how the belief is estimated, the upper bound of the belief value can be calculated in the same way as in the scenario without feedback. Specifically, for the fringe belief nodes, the upper bound of the belief value is calculated by assuming that all the future states are directly observable, i.e., according to \eqref{eq:ValueUB}. On the other hand, the lower bound of the belief value needs to be estimated with the majority vote policy. However, for different possible observations in the future states, the majority vote policy may lead to different action sequences, and each action sequence corresponds to an approximated value of the current belief. The lower bound of the belief value can be obtained by averaging over all possible action sequences. In general, the number of action sequences that need to be considered grows exponentially with the action steps until reaching the final state, which leads to prohibitive complexity for large $N$. Therefore, we propose to estimate the lower bound of the belief value by ignoring the observations in the future states. Specifically, for belief state $\hat{b}_t^t$, one unique action sequence is determined based on the majority vote policy in \eqref{eq:MVPolicy} by updating the belief state as if there will be no observations in the future steps. After calculating the lower and upper bound of the belief value for the fringe nodes in the belief tree, the lower and upper bounds of the belief values for the parent nodes can be evaluated according to
\begin{align}
&L(b)=\begin{cases}
v^{L}(b), & \textnormal{if $b$ is a fringe node}\\
\max_{a\in\mathcal{A}}L(b,a), & \textnormal{otherwise},
\end{cases}\label{eq:lowerBound_FB}\\
&\textnormal{with }L(b,a)\triangleq\sum_{s\in\mathcal S}b(s)R(s,a)+\sum_{z\in\mathcal Z}O(z)L(\tau(b,a,z));\\
&U(b)=\begin{cases}
v^{U}(b), & \textnormal{if $b$ is a fringe node}\\
\max_{a\in\mathcal{A}}U(b,a), & \textnormal{otherwise},
\end{cases}\label{eq:upperBound_FB}\\
&\textnormal{with }U(b,a)\triangleq\sum_{s\in\mathcal S}b(s)R(s,a)+\sum_{z\in\mathcal Z}O(z)U(\tau(b,a,z)),\label{eq:beliefActionValue_FB}
\end{align}
where $R(b,a)$ is the immediate reward, $O(z)$ is the probability for observing $z$, and $\tau(b,a,z)$ is the general belief update function with observation. For the problem considered in this section, the updated belief $\tau(b,a,z)$ is calculated according to \eqref{eq:updateBelief_FB}. Then, the Branch-and-Bound pruning algorithm can be applied to reduce the size of the belief tree and  the action is chosen based on the belief-action values of the current state, according to \eqref{eq:DStepPolicy}.

\begin{example}
Consider an illustrative example where $N=3$ packets need to be sent via the channel with packet erasure probability $p$. The feedback delay is $T=2$ time slots. In the first time slot, an information packet is sent. At the beginning of the 2nd time slot, no feedback information has arrived yet and  the belief state is estimated to be $b_2=\mathbf{b}_2^1=[1-p,p]$. Then, if action $a=0$ is applied, we have belief $b_3=\mathbf{b}_3^2=[(1-p)^2,p(1-p),p(1-p),p^2]$. If action $a=1$ is applied, we have belief $b_3=\mathbf{b}_3^1=[1-p^2,p^2]$. With $T=2$, we expect to observe the receiving status of the first packet at the beginning of the 4th time slot. Hence, with the $2$-step search algorithm, we can build the belief tree as shown in Fig.~\ref{F:tree_FB}. Take the fringe nodes $\mathbf{b}_4^2$ and $\tilde{\mathbf{b}}_4^2$ as an example. They are estimated based on the same sequence of actions $\{a_2=0,a_3=1\}$ and different observations. If we observe $z_1^4=1$, the estimated belief state is $\mathbf{b}_4^2=[1-p^2, p^2, 0, 0]$. If we observe $z_1^4=0$, the estimated belief state is $\tilde{\mathbf{b}}_4^2=[(1-p)^2,0,2p(1-p),p^2]$. With $p=0.3$, the values (value ranges) of the belief states are labeled in Fig.~\ref{F:tree_FB}. The lower and upper bounds are propagated from the fringe belief nodes to the parent nodes according to \eqref{eq:lowerBound_FB}-\eqref{eq:beliefActionValue_FB}. Then, based on \eqref{eq:DStepPolicy}, the action obtained by the 2-step search algorithm at $t=2$ is $a=0$, i.e., a new information packet should be sent during the 2nd time slot.

\begin{figure}[htb]
\centering
\includegraphics[scale=0.5]{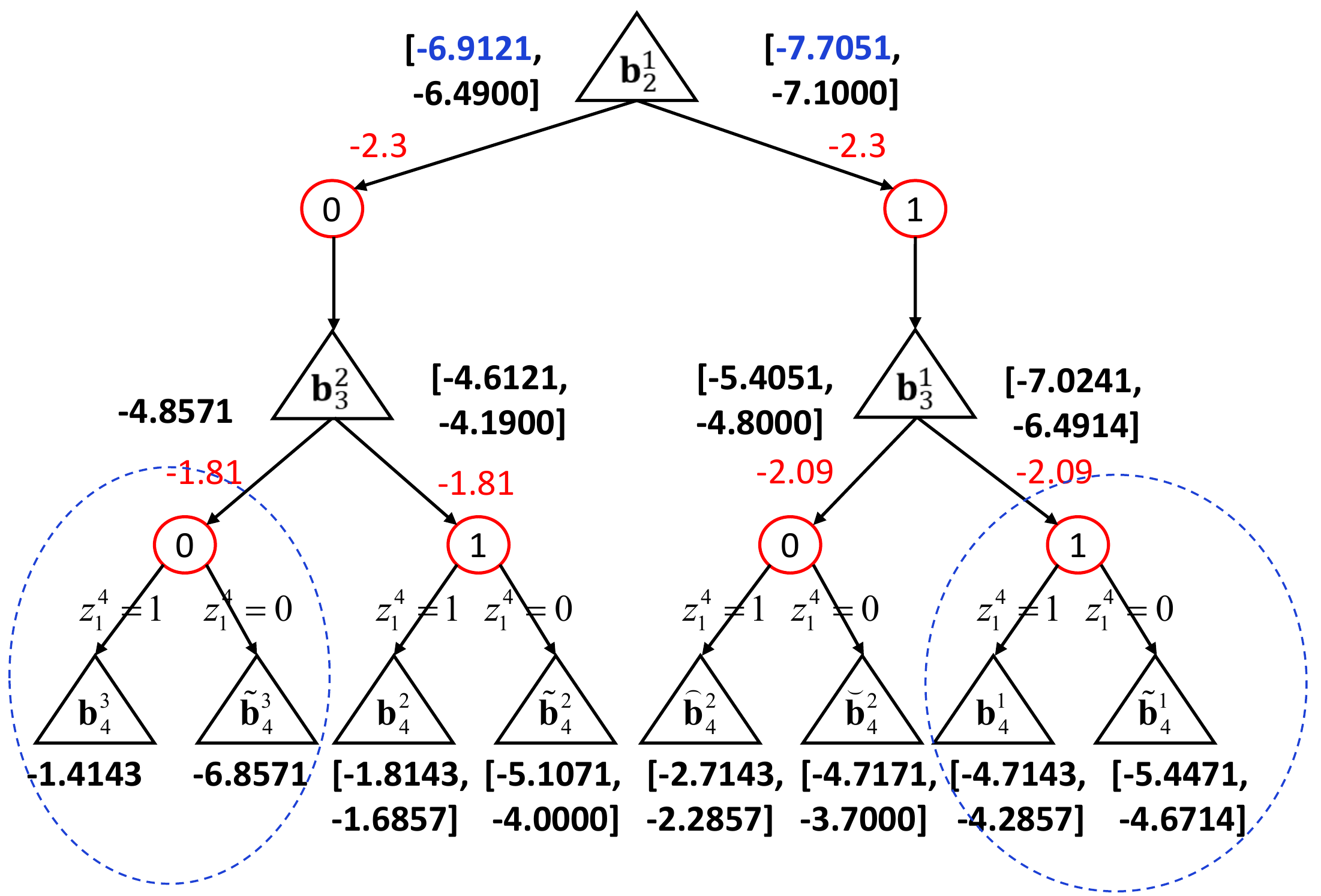}
\caption{The 2-step belief tree for $N=3$ at $t=2$ with feedback delay $T=2$. The triangles denote the belief states and the circles denote the actions. For $p=0.3$, the expected immediate reward for the belief-action pairs are labeled with red text, and the (range of the) optimal belief state values are labeled with bold black text.  }
\label{F:tree_FB}
\end{figure}
\end{example}

\section{Numerical Examples}\label{sec:simu}
In this section, numerical results are provided to evaluate the proposed low latency FEC design by solving the corresponding POMDP. First, we consider the case without feedback. The proposed algorithms discussed in Section~\ref{sec:NoFB} are compared with the benchmark scheme introduced in \cite{LowDelayFEC}, which is referred to as ``Low-Delay FEC". With Low-Delay FEC, a coded packet is sent after every $L-1$ information packets are sent, where $L$ is a positive integer with its value chosen according to the channel statistics \cite{LowDelayFEC}. First, we compare the coding decisions obtained by using different strategies in Fig~\ref{F:CodingView}. For each coding strategy, after all the information packets have been transmitted, i.e, when the state is $(N,\cdot,\cdot)$, the transmitter will continue sending the coded packets until the receiver is able to decode all the information packets\footnote{In the absence of feedback, the transmitter can estimate the probability for reaching the final state $(N,0,0)$ and terminate the transmission once the decoding probability is sufficiently large.}. It is observed from Fig~\ref{F:CodingView} that, before reaching the state $(N,\cdot,\cdot)$, the codes obtained by solving the POMDP have adaptive coding rate, with more information packets sent at the beginning of the transmission process whereas more coded packets sent at the later stage. This is expected since there are more packets queueing at the transmitter side at the beginning of the process, whereas more packets are waiting at the receiver side to be delivered to the application layer in the later transmission stage.
\begin{figure}[htb]
\centering
\includegraphics[scale=0.7]{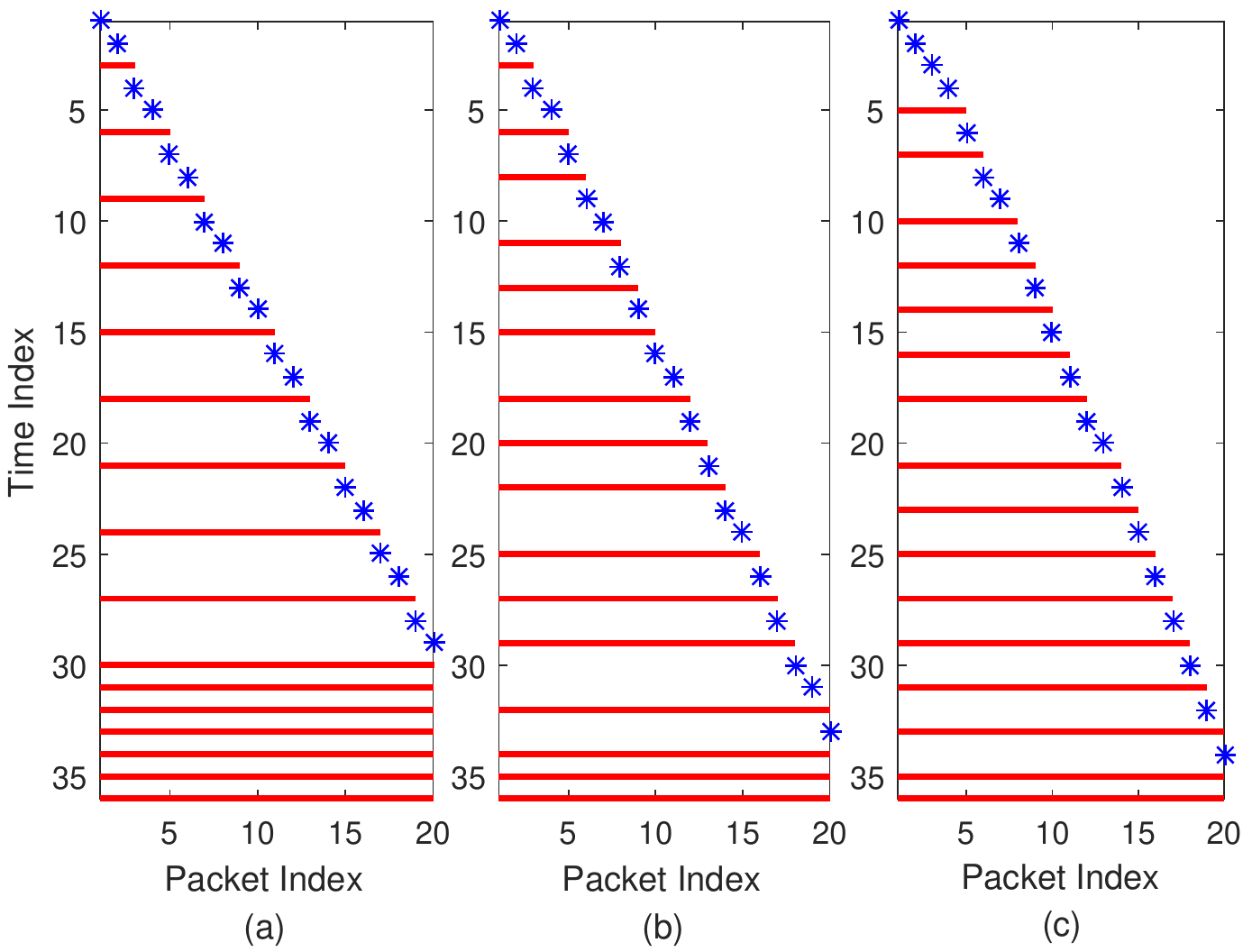}
\caption{The low latency FEC for $N=20$ and $p=0.3$ obtained from different strategies: (a) Low-Delay FEC with $L$=3; (b) Majority Vote; (c) 2-step Search. The blue dots denote the information packet and red lines represent the coded packets with the corresponding spans.  }
\label{F:CodingView}
\end{figure}

In Fig.~\ref{F:Results_noFB}(a), we compare the average end-to-end delay achieved by the proposed schemes with the optimal coding policy obtained via brute force search, for sending $N=15$ packets. It is observed that the proposed code designs have very close performance to that obtained via brute force search, while taking much less computations. In Fig.~\ref{F:Results_noFB}(b), we compare the average end-to-end delay achieved by the proposed schemes with Low-Delay FEC \cite{LowDelayFEC} for sending $N=100$ packets, where the brute force search is not feasible. It is observed that the performance of the Low-Delay FEC is critically dependent on the choice of $L$. By solving the POMDP with simple majority vote policy, the resulting code strictly outperforms the Low-Delay FEC with the optimal choice of $L$ at any given erasure probability. We also observe that the $2$-step algorithm can lead to better code design than the majority vote policy, at the cost of increased complexity. If we further increase the search steps from 2 to 4, the performance improvement  is negligible as shown in Fig.~\ref{F:Results_noFB}. It was proved in \cite{Ross2008} that the $D$-step search algorithm is asymptotically optimal, i.e., as the number of the search steps $D$ increases, the solution approaches the optimal policy for the POMDP. Therefore, we conjecture that the code design with the $2$-step search algorithm is close to optimal.

\begin{figure}[h]
\centering
\subfloat[][]{%
\adjustbox{valign=b}{
\includegraphics[width=0.45\textwidth]{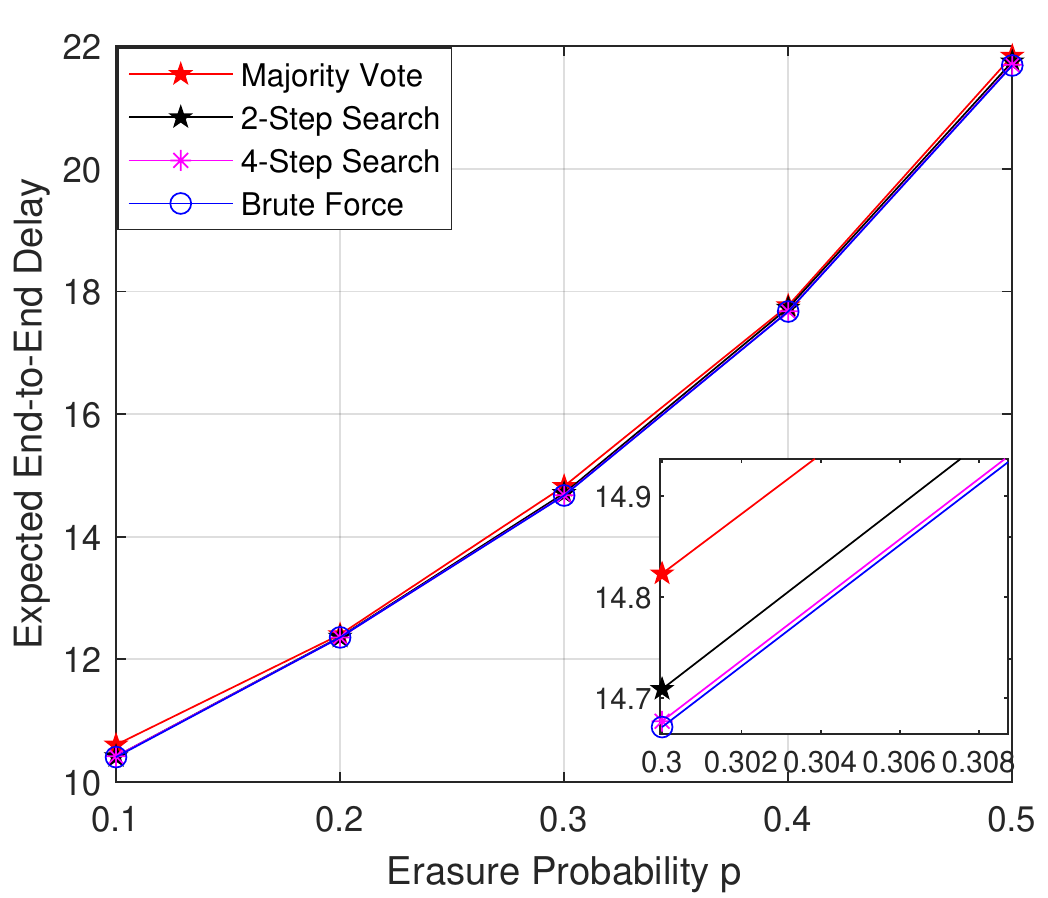}}}
\hspace{0.03cm}
\subfloat[][]{%
\adjustbox{valign=b}{
\includegraphics[width=0.45\textwidth]{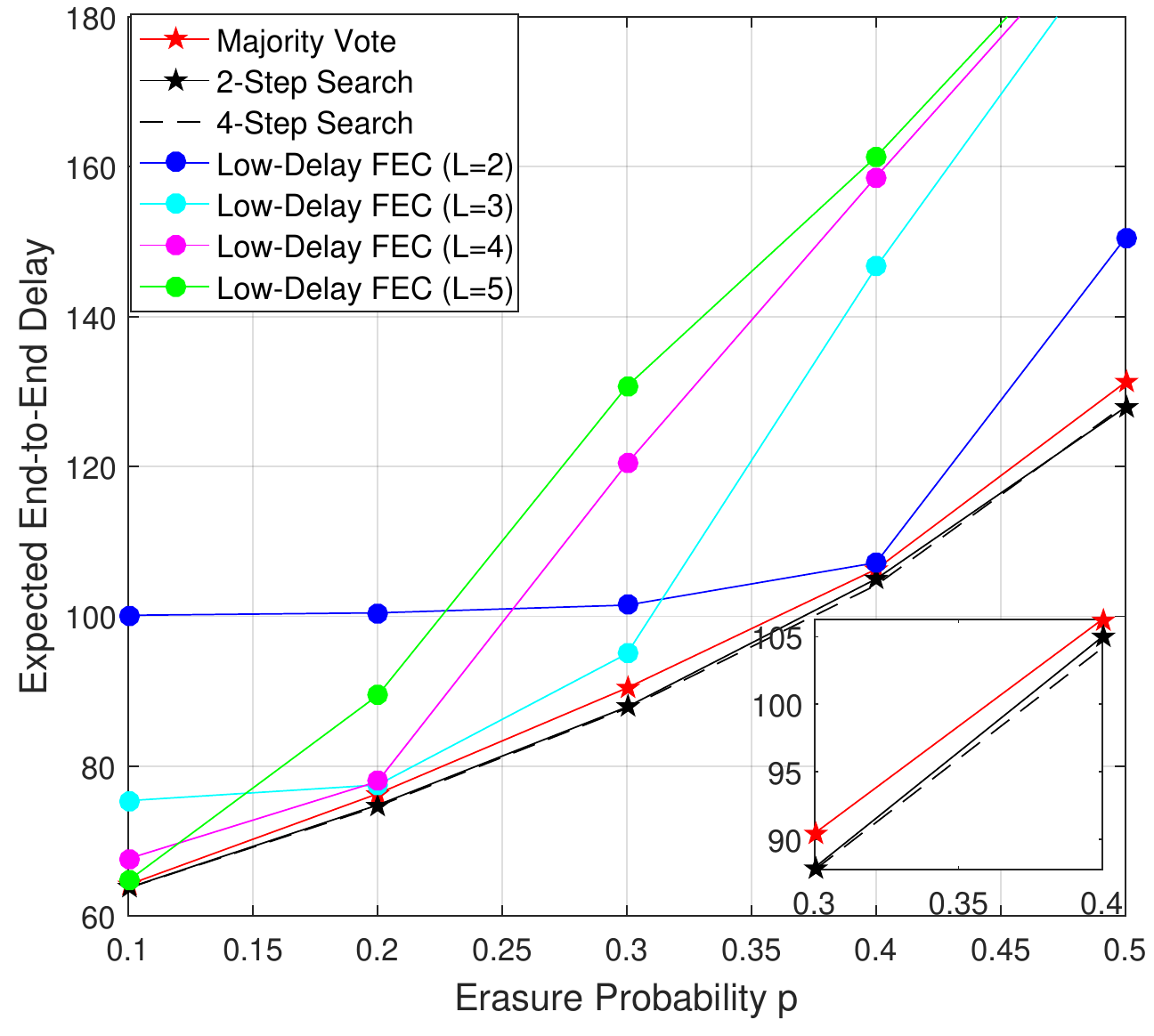}}}
\caption{End-to-end delay achieved by various code designs without feedback. (a) $N=15$. (b) $N=100$. }
\label{F:Results_noFB}
\end{figure}

Next, we consider the minimum-latency FEC design with delayed feedback. Note that the Low-Delay FEC proposed in \cite{LowDelayFEC} has a fixed code structure, which cannot utilize the feedback information. Hence, the performance of the Low-Delay FEC cannot be further improved even in the presence of feedback. In contrast, our proposed approach directly utilizes the feedback information, as discussed in Section~\ref{sec:DelayedFB}. To evaluate our proposed code design with delayed feedback, we compare it with the existing scheme proposed in \cite{Garrido2018}, which is referred to as ``Feedback-based adaptive coding"\footnote{The coding scheme proposed in \cite{Garrido2018} works for general packet arrival models. For comparison, we apply the coding scheme proposed in \cite{Garrido2018} with the packet arrival model where $N$ packets arrive at $t=0$ and no packet arrives for $t>0$. }. In \cite{Garrido2018}, the author proposes to estimate the receiver queue length at the transmitter side based on the delayed feedback information and the channel statistics. Then, a coded packet is sent if the estimated receiver queue length is above a certain threshold. The optimal threshold was shown to be 1  for achieving the minimum end-to-end latency in \cite{Garrido2018}. In Fig.~\ref{F:Results_FB}, we compare the average end-to-end latency achieved by the proposed coding schemes and the Feedback-based adaptive coding for sending $N=100$ packets with feedback delay $T=2$. The simulation results are obtained by averaging over 1000 realizations. It is observed that when the packet erasure probability is low, the FEC designed with the majority vote policy has a similar performance as the Feedback-based adaptive coding proposed in \cite{Garrido2018}. However, when the erasure probability is large, the majority vote policy significantly outperforms the Feedback-based adaptive coding. Furthermore, the $2$-step search algorithm has a constant performance gain over the majority vote policy for a wide range of erasure probabilities.

\begin{figure}[htb]
\centering
\includegraphics[scale=0.9]{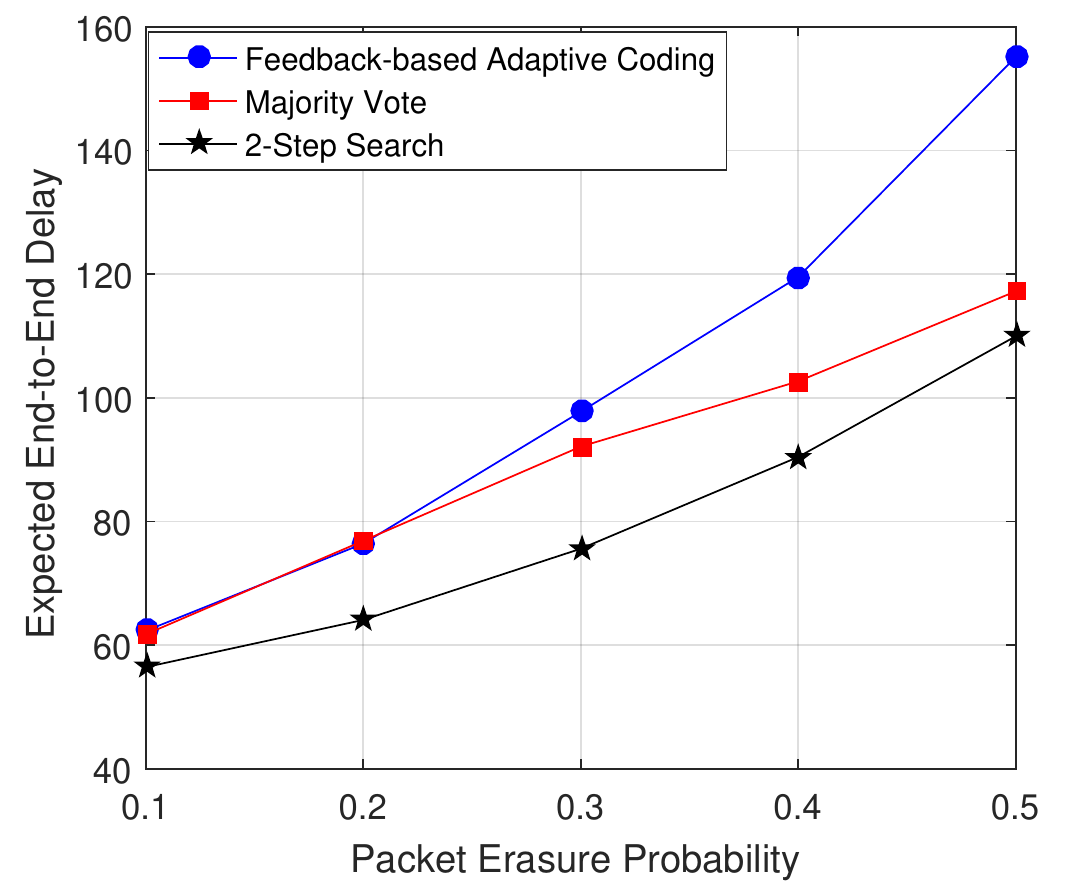}
\caption{Average end-to-end delay achieved by various code designs for sending $N=100$ packets with feedback delay $T=2$. }
\label{F:Results_FB}
\end{figure}

In Fig.~\ref{F:VarFB}, we plot the average end-to-end latency achieved by the $2$-step search algorithm for sending $N=100$ packets with different feedback delays. Note that $T=0$ corresponds to the case with instantaneous feedback, for which the $2$-step search algorithm reduces to ARQ. On the other hand, $T=\infty$ corresponds to the case without feedback. As expected, the latency performance degrades as the feedback delay increases. Furthermore, a substantial performance degradation is observed when the feedback delay increases from $T=0$ to $T=2$. As $T$ further increases, the performance degradation becomes smaller. For the delayed feedback with feedback delay $T=8$, we can still observe a significant performance gain as compared with the case without any feedback.

\begin{figure}[htb]
\centering
\includegraphics[scale=0.8]{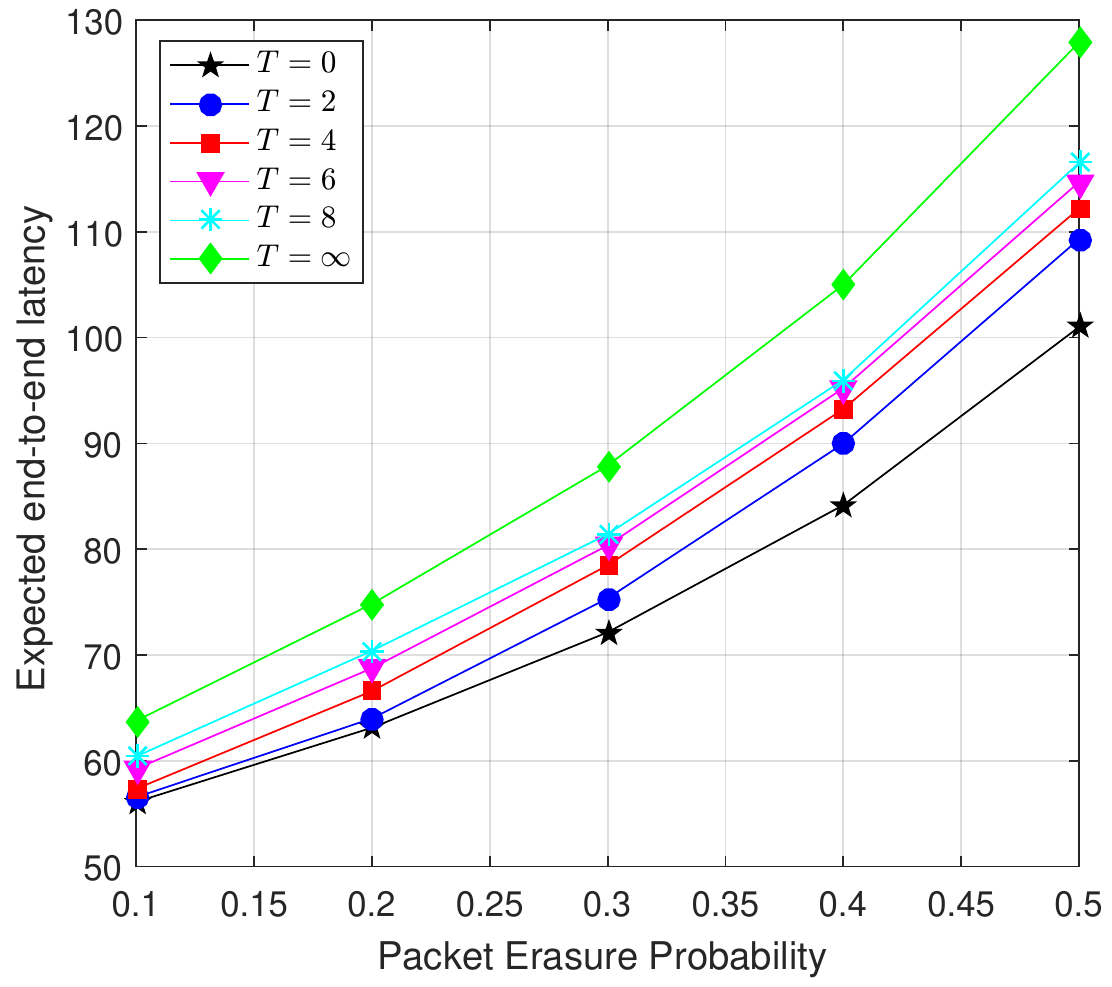}
\caption{Average end-to-end latency achieved by 2-step search algorithm for various feedback delays, with $N=100$. }
\label{F:VarFB}
\end{figure}

\section{Conclusion}\label{sec:con}
In this paper, we have investigated the optimal FEC design for achieving the minimum average end-to-end latency over a lossy channel without or with delayed feedback. We formulated the FEC design problem as a POMDP  and proposed the solutions  with the majority vote policy and the $D$-step search algorithm. The majority vote policy can be implemented with a very low complexity, whereas the $D$-step algorithm can achieve the close-to-optimal solution with a higher complexity. The codes constructed by solving the POMDP were shown to outperform the existing codes for both scenarios without and with delayed feedback. Although we have focused on the i.i.d. packet erasures in this paper, the proposed POMDP formulation and the solutions can be extended to the case with bursty erasures, e.g, the Gilbert-Elliott channel, by including the channel states into the environment states of the POMDP.

In the future, we will extend the proposed solutions for  adaptive packet arrival, in which case the number of states may go to infinity, and investigate the tradeoff between the achievable rate and latency in such scenario.  Furthermore, we will also consider the optimal FEC design for achieving the minimum \emph{guaranteed latency}, which is a function of the expectation and variance of the end-to-end latency.

\appendices
\section{Proof of Lemma~\ref{lem:finalValue}}\label{A:finalValue}
First, it is noted the state $s=(N,0,0)$ is the final state with no further transitions and hence we have $v^*(N,0,0)=0$. Then, according to the transition diagram shown in Fig.~\ref{F:transition}(c), after applying the optimal action $a=1$, we have
\begin{align}
v^*(N,w,w-1)=-w+(1-p)v^*(N,0,0)+pv^*(N,w,w-1),
\end{align}
which renders $v^*(N,w,w-1)=-\frac{w}{1-p}$. Then, according to the transition diagram shown in Fig.~\ref{F:transition}(b), we have
\begin{align}
v^*(N,w,d)=-w+(1-p)v^*(N,w,d+1)+pv^*(N,w,d), \forall d<w,
\end{align}
which leads to
\begin{align}
v^*(N,w,d)&=-\frac{w}{1-p}+v^*(N,w,d+1)\nonumber\\
&=-\frac{w(w-d-1)}{1-p}+v^*(N,w,w-1)=-\frac{w(w-d)}{1-p}, \forall d<w-1.
\end{align}
It can be easily verified that $v^*(N,w,d)=-\frac{w(w-d)}{1-p}$ holds for all the valid states $(N,w,d)$ with $0\leq d<w\leq N$.

\section{Proof of Lemma~\ref{lem:OptValues}}\label{A:OptValue}
If $\mathrm{MDP}_1$ is fully observable, the optimal policy can be easily determined, which is sending the information packet if and only if $w=0$ and sending the coded packets otherwise. Hence, according to the transition diagram shown in Fig.~\ref{F:transition}, we have
\begin{align}
&v^*_{MDP}(n,0,0)=-(N-n)+(1-p)v^*_{MDP}(n+1,0,0)+pv^*_{MDP}(n,0,0),\label{eq:state00}\\
&v^*_{MDP}(n,w,w-1)=-(N-n+w)+(1-p)v^*_{MDP}(n,0,0)+pv^*_{MDP}(n,w,w-1),\label{eq:statewm1}\\
&v^*_{MDP}(n,w,d)=-(N-n+w)+(1-p)v^*_{MDP}(n,w,d+1)+pv^*_{MDP}(n,w,d),\forall d<w-1,\label{eq:statewd}.
\end{align}
From \eqref{eq:state00}, we have
\begin{align}
v^*_{MDP}(n,0,0)&=-\frac{N-n}{1-p}+v^*_{MDP}(n+1,0,0)\nonumber\\
&=-\sum_{i=n}^{N-1}\frac{N-i}{1-p}+v^*_{MDP}(N,0,0)\nonumber\\
&=-\frac{(N-n+1)(N-n)}{2(1-p)}\label{eq:valueMid00}
\end{align}
where the last equality follows from algebraic summation and the fact that the state $(N,0,0)$ is the final state with the optimal value $v^*_{MDP}(N,0,0)=v^*(N,0,0)=0$.
Then, by substituting \eqref{eq:valueMid00} into \eqref{eq:statewm1}, we have
\begin{align}
v^*_{MDP}(n,w,w-1)=-\frac{N-n+w}{1-p}-\frac{(N-n+1)(N-n)}{2(1-p)}\label{eq:valueMidwm1}.
\end{align}
Then, with \eqref{eq:valueMidwm1} and the recursive formula \eqref{eq:statewd}, we have
\begin{align}
v^*_{MDP}(n,w,d)&=-\frac{N-n+w}{1-p}+v^*_{MDP}(n,w,d+1)\nonumber\\
&=-\frac{(N-n+w)(w-d-1)}{1-p}+v^*_{MDP}(n,w,w-1)\nonumber\\
&=-\frac{(N-n+w)(w-d)}{1-p}-\frac{(N-n+1)(N-n)}{2(1-p)}, \forall d<w-1. \label{eq:ValueWD}
\end{align}
It can be verified from \eqref{eq:valueMid00} and \eqref{eq:valueMidwm1} that the expression \eqref{eq:ValueWD} holds for all the valid states $(n,w,d)$ with $0\leq n\leq N$,$0\leq w\leq N$ and $0\leq d<w$.

\bibliographystyle{ieeetr}
\bibliography{MachineLearning}

\end{document}